\numberwithin{equation}{section}
\def \endprf{\hfill {\vrule height6pt width6pt depth0pt}\medskip}
\newenvironment{proof}{\noindent {\bf Proof} }{\endprf\par}
\newtheorem{theorem}{Theorem}[section]
\newtheorem{lemma}[theorem]{Lemma}
\newtheorem{corollary}[theorem]{Corollary}
\newtheorem{definition}[theorem]{Definition}
\renewcommand{\>}{\rangle}
\newcommand{\Z}{\mathbb{Z}}
\newcommand{\R}{\mathbb{R}}
\newcommand{\C}{\mathbb{C}}
\newcommand{\srf}{\text{SRF}}
\newcommand{\srfeq}{\text{\emph{SRF}}}
\newcommand{\PTn}{P_{T} }
\newcommand{\PTcn}{P_{T^{c}}}
\newcommand{\normInf}[1]{\left|\left| #1 \right|\right| _{L_\infty}}
\newcommand{\normTwo}[1]{\left|\left| #1 \right|\right| _{2}}
\newcommand{\normLOne}[1]{\left|\left| #1 \right|\right| _{L_1}}
\newcommand{\normLTwo}[1]{\left|\left| #1 \right|\right| _{\mathcal{L}_2}}
\newcommand{\normInfInf}[1]{\left|\left| #1 \right|\right| _{\infty}}
\newcommand{\normTV}[1]{\left|\left| #1 \right|\right| _{\text{TV}}}
\newcommand{\signEq}[1]{\emph{\text{sign}}\left( #1 \right) }
\newcommand{\abs}[1]{\left| #1 \right|}
\newcommand{\keys}[1]{\left\{ #1 \right\}}
\newcommand{\sqbr}[1]{\left[ #1 \right]}
\newcommand{\brac}[1]{\left( #1 \right) }
\newcommand{\Real}[1]{\text{Re}\brac{ #1 }}
\newcommand{\Id}{\text{\em I}}
\renewcommand{\P}{\operatorname{\mathbb{P}}}
\newcommand{\MAT}[1]{\begin{bmatrix} #1 \end{bmatrix}}
\newcommand{\PROD}[2]{\left \langle #1, #2\right \rangle}
\newcommand{\SN}{S_{\text{near}}}
\newcommand{\SF}{S_{\text{far}}}
\newcommand{\SFEq}{S_{\emph{{\text{far}}}}}
\newcommand{\SNEq}{S_{\emph{\text{near}}}}
\newcommand{\Ker}{G}
\newcommand{\low}{\text{lo}}
\newcommand{\high}{\text{hi}}
\newcommand{\Klo}{K_{\low}}
\newcommand{\Klohat}{\widehat{K}_{\low}}
\newcommand{\Khi}{K_{\high}}
\newcommand{\KhiEq}{K_{\emph{\high}}}
\newcommand{\flo}{f_{\low}}
\newcommand{\floEq}{f_{\emph{\low}}}
\newcommand{\fhi}{f_{\high}}
\newcommand{\lambdalo}{\lambda_{\low}}
\newcommand{\lambdaloEq}{\lambda_{\emph{\low}}}
\newcommand{\lambdahi}{\lambda_{\high}}
\newcommand{\lambdahiEq}{\lambda_{\emph{\high}}}
\newcommand{\Qlo}{Q_{\low}}
\newcommand{\QloEq}{Q_{\emph{\low}}}
\newcommand{\Qhi}{Q_{\high}}
\newcommand{\Flo}{F_{\low}}
\newcommand{\FloEq}{F_{\emph{\low}}}
\newcommand{\dEq}{\emph{\text{d}}}
\newcommand{\xest}{x_{\text{est}}}
\newcommand{\xestEq}{x_{\emph{\text{est}}}}
\newcommand{\uest}{u_{\text{est}}}
\newcommand{\uestEq}{u_{\emph{\text{est}}}}
\newcommand{\normTVEq}[1]{\left|\left| #1 \right|\right| _{\emph{\text{TV}}}}
\author{Emmanuel J. Cand\`{e}s\thanks{Departments of Mathematics and
    of Statistics, Stanford University, Stanford CA} \,\, and Carlos
  Fernandez-Granda\thanks{Department of Electrical Engineering,
    Stanford University, Stanford CA}}
\title{Super-Resolution from Noisy Data}
\date{October 2012; Revised April 2013}
\begin{document}

\maketitle

\begin{abstract}
  This paper studies the recovery of a superposition of point sources
  from noisy bandlimited data.  In the fewest possible words, we only
  have information about the spectrum of an object in the
  low-frequency band $[-\flo, \flo]$ and seek to obtain a higher
  resolution estimate by extrapolating the spectrum up to a frequency
  $\fhi > \flo$. We show that as long as the sources are separated by
  $2/\flo$, solving a simple convex program produces a stable estimate
  in the sense that the approximation error between the
  higher-resolution reconstruction and the truth is proportional to
  the noise level times the square of the \emph{super-resolution
    factor} (SRF) $\fhi/\flo$.
\end{abstract}

{\bf Keywords.} Deconvolution, stable signal recovery, sparsity, line
spectra estimation, basis mismatch, super-resolution factor.

\section{Introduction}

It is often of great interest to study the fine details of a signal at
a scale beyond the resolution provided by the available
measurements. In a general sense, super-resolution techniques seek to
recover high-resolution information from coarse-scale
data. There is a gigantic literature on this subject as
researchers try to find ways of breaking the
diffraction limit---a fundamental limit on the possible
resolution---imposed by most imaging systems.  Examples of
applications include conventional optical
imaging~\cite{superres_survey}, astronomy~\cite{astronomy_puschmann},
medical imaging~\cite{medical_greenspan}, and
microscopy~\cite{microscopy_mccutchen}. In electronic imaging, photon
shot noise limits the pixel size, making super-resolution techniques
necessary to recover sub-pixel
details~\cite{book_milanfar,imaging_park}. Among other fields
demanding and developing super-resolution techniques, one could cite
spectroscopy~\cite{spectroscopy_superresolution},
radar~\cite{radar_odendaal}, non-optical medical
imaging~\cite{pet_kennedy} and geophysics~\cite{seismic_khaidukov}.

In many of these applications, the signal we wish to super-resolve is
a superposition of point sources; depending upon the situation, these
may be celestial bodies in astronomy~\cite{pointsource_astronomy},
molecules in microscopy~\cite{pointsource_microscopy}, or line spectra in speech analysis~\cite{pointsource_speech}. A large part of the literature on super-resolution revolves around the problem of distinguishing two blurred point sources that are close together, but there has been much less analysis on the conditions under which it is possible to super-resolve the location of a large number of point sources with high precision. This question is of crucial importance, for instance, in fluorescence microscopy. Techniques such as photoactivated localization microscopy (PALM)~\cite{palm,fpalm} or stochastic optical reconstruction microscopy (STORM)~\cite{storm} are based on the use of probes that switch randomly between a fluorescent and a non-fluorescent state. To super-resolve a certain object, multiple frames are gathered and combined. Each frame consists of a superposition of blurred light sources that correspond to the active probes and are mostly well separated.

In the companion article~\cite{superres}, the authors studied the problem of recovering superpositions of point sources in a noiseless setting, where one has perfect low-frequency information. In contrast, the present paper considers a setting where the data are contaminated with noise, a situation which is unavoidable in practical
applications. In a nutshell, \cite{superres} proves that with noiseless
data, one can recover a superposition of point sources exactly,
namely, with arbitrary high accuracy, by solving a simple convex
program. This phenomenon holds as long as the spacing between the
sources is on the order of the resolution limit. With noisy data now,
it is of course no longer possible to achieve infinite precision. In
fact, suppose the noise level and sensing resolution are fixed. Then
one expects that it will become increasingly harder to recover the
fine details of the signal as the scale of these features become
finer. The goal of this paper is to make this vague statement
mathematically precise; we shall characterize the estimation error as
a function of the noise level and of the resolution we seek to
achieve, showing that it is in fact possible to super-resolve point sources from noisy data with high precision via convex optimization.

\subsection{The super-resolution problem}
\label{sec:sr_problem}
\begin{figure}
\centering
\includegraphics[width=13cm]{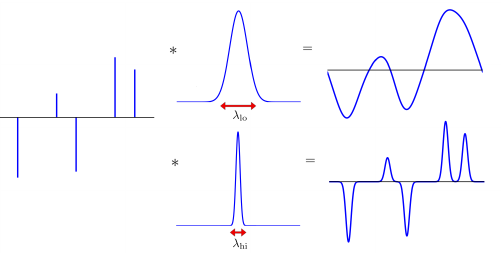}
\caption{Sketch of the super-resolution factor (SRF). A signal (left)
  is measured at a low resolution by a convolution with a kernel (top
  middle) of width $\lambdalo$ (top right). Super-resolution aims at
  approximating the outcome of a convolution with a much narrower
  kernel (bottom middle) of width $\lambdahi$. Hence, the goal is to
  recover the bottom right curve.}
\label{fig:srf}
\end{figure}
To formalize matters, we have observations about an object $x$ of the
form
\begin{equation}
  \label{eq:model}
  y(t) = (\Qlo x)(t) + z(t), 
\end{equation}
where $t$ is a continuous parameter (time, space, and so on) belonging
to the $d$-dimensional cube $[0,1]^d$. 
Above, $z$ is a noise term which can either be stochastic or deterministic,
and $\Qlo$ is a bandlimiting operator with a frequency cut-off equal
to $\flo=1/\lambdalo$.  Here, $\lambdalo$ is a positive parameter
representing the finest scale at which $x$ is observed.  To make this
more precise, we take $\Qlo$ to be a low-pass filter of width
$\lambdalo$ as illustrated at the top of Figure~\ref{fig:srf}; that
is,
\[
(\Qlo x)(t) = (\Klo * x)(t)
\]
such that in the frequency domain the convolution equation becomes 
\[
(\widehat{\Qlo x})(f) = \Klohat(f) \hat{x}(f), \quad f \in \Z^d.
\]
Here and henceforth we denote the usual
Fourier transform of a measure or function $g$, provided that it exists, by $\hat g(f) = \int e^{-i2\pi \<f, t\>} g(\text{d}t)$.  The spectrum of the low-pass kernel $\Klohat(f)$ vanishes
outside of the cell $[-\flo, \flo]^d$. 

Our goal is to resolve the signal $x$ at a finer scale $\lambdahi \ll
\lambdalo$. In other words, we would like to obtain a
\emph{high-resolution} estimate $\xest$ such that $\Qhi \, \xest
\approx \Qhi \, x$, where $\Qhi$ is a bandlimiting operator with
cut-off frequency $\fhi :=1/\lambdahi > \flo$. This is illustrated at
the bottom of Figure~\ref{fig:srf}, which shows the convolution
between $\Khi$ and $x$.  A different way to pose the problem is as
follows: we have noisy data about the spectrum of an object of
interest in the low-pass band $\sqbr{-\flo,\flo}$, and would like to
estimate the spectrum in the possibly much wider band
$\sqbr{-\fhi,\fhi}$. We introduce the super-resolution factor (SRF) as: 
\begin{equation}
\label{eq:srf}
\srf :=  \frac{\fhi}{\flo} = \frac{\lambdalo}{\lambdahi};  
\end{equation}
in words, we wish to double the resolution if the SRF is equal to two,
to quadruple it if the SRF equals four, and so on.  Given the
notorious ill-posedness of spectral extrapolation, a natural question
is how small the error at scale $\lambdahi$ between the estimated and
the true signal $\Khi \ast (\xest - x)$ can be? In
particular, how does it scale with both the noise level and the SRF?
This paper addresses this important question.

\subsection{Models and methods}

As mentioned earlier, we are interested in superpositions of point
sources modeled as
\begin{equation*}
  x = \sum_j a_j \delta_{t_j}, 
\end{equation*}
where $\{t_j\}$ are points from the interval $[0,1]$, $\delta_{\tau}$
is a Dirac measure located at $\tau$, and the amplitudes $a_j$ may be
complex valued. Although we focus on the one-dimensional case, our
methods extend in a straightforward manner to the multidimensional
case, as we shall make precise later on.  We assume the model
\eqref{eq:model} in which $t \in [0,1]$, which from now on we identify
with the unit circle $\mathbb{T}$, and $z(t)$ is a bandlimited error
term obeying
\begin{equation}
  \label{eq:error}
\|z\|_{L_1} = \int_{\mathbb{T}} |z(t)| \, \text{d}t \le \delta. 
\end{equation}
The measurement error $z$ is otherwise arbitrary and can be
adversarial. For concreteness, we set $\Klo$ to be the periodic
Dirichlet kernel
\begin{equation}
  \label{eq:dirichlet}
\Klo(t) = \sum_{k = -\flo}^{\flo} e^{i2\pi k t} = \frac{\sin(\pi(2\flo
  + 1)t)}{\sin(\pi t)}. 
\end{equation}
By definition, for each $f \in \Z$, this kernel obeys
$\Klohat(f) = 1$ if $|f| \le \flo$ whereas
$\Klohat(f)=0$ if $|f| > \flo$. We emphasize, however, that our
results hold for other low-pass filters. Indeed, our model
\eqref{eq:model} can be equivalently written in the frequency domain
as $\hat{y}(f) = \hat{x}(f) + \hat{z}(f)$, $|f| \le \flo$. Hence, if
the measurements are of the form $y=G_{\text{lo}}*x + z$ for some
other low-pass kernel $G_{\text{lo}}$, we can filter them linearly to obtain $\hat{y}_G(f):=\hat{y}(f)/\widehat{G}_{\text{lo}}(f) = \hat{x}(f) + \hat{z}(f)/\widehat{G}_{\text{lo}}(f)$. Our results can then be applied to this formulation if the weighted perturbation $\hat{z}(f)/\widehat{G}_{\text{lo}}(f)$ is bounded.

To perform recovery, we propose solving
\begin{equation}
\label{TVproblem_relaxed}
\min_{\tilde x} \,  \normTV{\tilde x} 
\quad \text{subject to} \quad \normLOne{\Qlo \tilde x - y} \leq \delta, 
\end{equation}
Above, $\normTV{x}$ is the total-variation norm of a measure (see
Chapter 6 of~\cite{rudin} or Appendix A in~\cite{superres}), which can
be interpreted as the generalization of the $\ell_1$ norm to the real
line. (If $x$ is a probability measure, then $\normTV{x} = 1$.) This
is not to be confused with the total variation of a function, a
popular regularizer in signal processing and computer vision. Finally, it is important to observe that the recovery algorithm is completely
agnostic to the target resolution~$\lambdahi$, so our results hold
simultaneously for any value of $\lambdahi >\lambdalo$. 

\subsection{Main result}

Our objective is to approximate the signal up until a certain
resolution determined by the width of the smoothing kernel $\lambdahi
> \lambdalo$ used to compute the error. To fix ideas, we set
\begin{align}
\label{eq:fej}
 \Khi(t)= \frac{1}{\fhi +1} \sum_{k = -\fhi}^{\fhi}\brac{\fhi
  + 1 - \abs{k}} e^{i2\pi k t} = \frac{1}{\fhi +1}\brac{ \frac{\sin(\pi(\fhi
  + 1)t)}{\sin(\pi t)}}^2
\end{align}
to be the Fej\'er kernel with cut-off frequency $\fhi =
1/\lambdahi$. Figure~\ref{fig:fejer} shows this kernel together with
its spectrum.
\begin{figure}
\centering
\subfloat[][]{
\includegraphics[width=5.5cm]{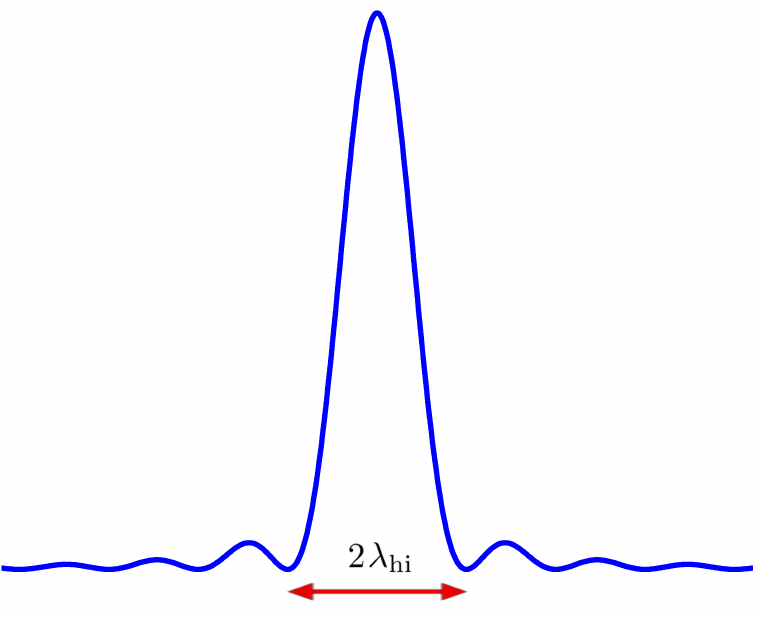}
}
\subfloat[][]{
\includegraphics[width=5.5cm]{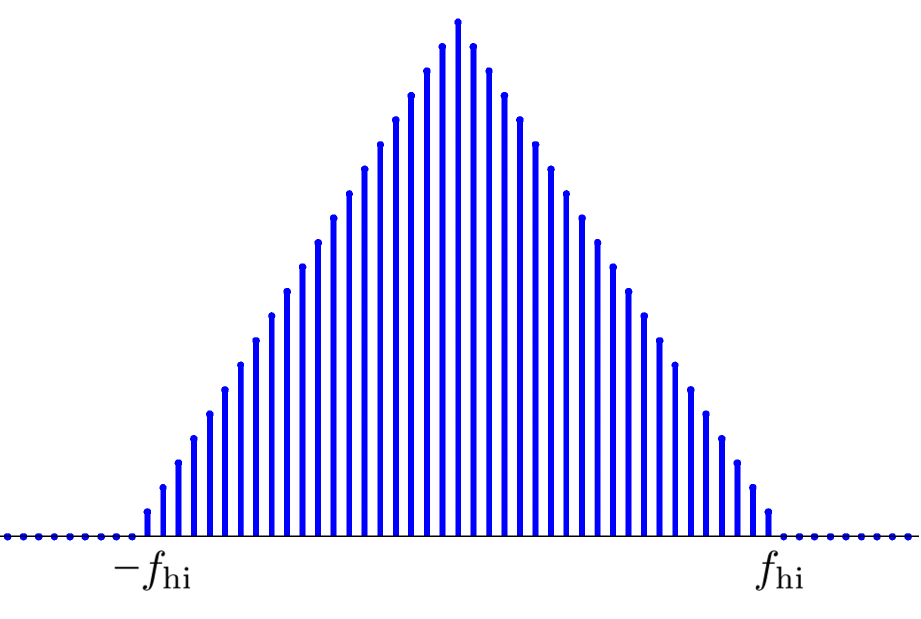}
}
\caption{The Fej\'er kernel \eqref{eq:fej} (a) with half width about
  $\lambdahi$, and its Fourier series coefficients (b). The kernel is
  bandlimited since the Fourier coefficients vanish beyond the cut-off
  frequency $\fhi$.}
\label{fig:fejer}
\end{figure}
As explained in Section~3.2 of~\cite{superres}, no matter what method
is used to achieve super-resolution, it is necessary to introduce a
condition about the support of the signal, which prevents the sources
from being too clustered together. Otherwise, the problem is easily shown to
be hopelessly ill-posed by leveraging Slepian's work on prolate
spheroidal sequences \cite{slepian_discrete}. In this paper, we use
the notion of minimum separation.
\begin{definition}[Minimum separation] For a family of points $T \subset
  \mathbb{T}$, the minimum separation is defined as the closest
  distance between any two elements from $T$,
  \begin{equation*}
    \Delta(T) = \inf_{(t, t') \in T \, : \, t \neq t'} \, \, |t - t'|. 
  \end{equation*}
\end{definition}

Our model \eqref{eq:error} asserts that we can achieve a low-resolution
error obeying
\[
\normLOne{\Klo*(\xest- x)} \le \delta,
\]
but that we cannot do better as well. The main question is: how does
this degrade when we substitute the low-resolution with the
high-resolution kernel? 
\begin{theorem}
\label{theorem:noise}
Assume that the support $T$ of $x$ obeys the separation condition
\begin{align}
\Delta(T) \geq 2 \lambdaloEq. \label{eq:min-distance}
\end{align}
Then under the noise model~\eqref{eq:error}, any solution $\xestEq$ to
problem~\eqref{TVproblem_relaxed}\footnote{To be precise, the theorem holds for any feasible point $\tilde{x}$ obeying $\normTV{\tilde{x}}\leq \normTV{x}$; this set is not empty since it contains $x$.} obeys
\begin{equation*}
  \normLOne{\KhiEq*(\xestEq- x)} \leq C_0 \, \srfeq^2 \, \delta,  
\end{equation*}
where $C_0$ is a positive numerical constant. 
\end{theorem}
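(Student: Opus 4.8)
Plan. Write $h = \xest - x$ for the error measure. The strategy is to (i) translate the two constraints of \eqref{TVproblem_relaxed} into information about $h$, (ii) use dual certificates to show that $h$ is sharply concentrated around the support $T$, and (iii) push this concentration through the high-resolution kernel $\Khi$, where the quadratic dependence on the SRF will emerge. For step (i): since $\xest$ is feasible and $y=\Qlo x+z$ with $\normLOne{z}\le\delta$, the triangle inequality gives the low-resolution bound $\normLOne{\Qlo h}\le 2\delta$, and I use the normalization $\normTV{\xest}\le\normTV{x}$ from the footnote. Because $\Qlo$ projects onto the frequencies $\abs{f}\le\flo$, every trigonometric polynomial $q$ with spectrum in $[-\flo,\flo]$ obeys $\PROD{q}{h}=\PROD{q}{\Qlo h}$, so that $\abs{\PROD{q}{h}}\le\normInfInf{q}\,\normLOne{\Qlo h}\le 2\delta$. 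This is the only channel through which the data enters, and every bound on a linear functional of $h$ below will come from a judicious choice of a low-pass $q$.

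For step (ii) I invoke the dual certificate constructed in \cite{superres}: under \eqref{eq:min-distance} there is a low-pass $q$ with $q(t_j)=\sign{a_j}$, $\normInfInf{q}\le 1$, and a quantitative gap $1-\abs{q(t)}\gtrsim \min\brac{\flo^2\,\text{dist}(t,T)^2,\,1}$. Writing, with $e^{i\phi}$ the phase of $\xest$, the identity $\normTV{\xest}-\Real{\PROD{q}{\xest}}=\int\brac{1-\Real{\bar q\,e^{i\phi}}}\,d\abs{\xest}\ge\int\brac{1-\abs{q}}\,d\abs{\xest}$ and bounding the left side by $\normTV{x}-\Real{\PROD{q}{\xest}}=-\Real{\PROD{q}{h}}\le 2\delta$, the quantitative gap (whose weight $\text{dist}(\cdot,T)^2$ annihilates the atoms of $x$) yields the two key estimates
\[
\int_{\SF} d\abs{h}\;\lesssim\;\delta,\qquad \int_{\SN}\flo^2\,\text{dist}(t,T)^2\,d\abs{h}\;\lesssim\;\delta,
\]
where $\SN$ is the union of intervals of radius $\sim\lambdalo$ about the $t_j$ and $\SF$ its complement. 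To control the near part I also need the local mass $m_0^{(j)}$ and first moment $m_1^{(j)}$ of $h$ over the component of $\SN$ around $t_j$. For these I build, via the interpolation machinery of \cite{superres}, low-pass polynomials of sup-norm $O(1)$ that interpolate a prescribed unit-modulus value (respectively a prescribed derivative of size $\sim\flo$) at a single $t_j$ while vanishing to first order at the remaining $t_k$; choosing the interpolated phases to align with the signs of the local moments and feeding the result into $\abs{\PROD{q}{h}}\le 2\delta$, the off-spike and remainder contributions are absorbed by the second-moment estimate above, giving $\sum_j\abs{m_0^{(j)}}\lesssim\delta$ and $\sum_j\abs{m_1^{(j)}}\lesssim\delta/\flo$.

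For step (iii), split $h=h_{\SF}+\sum_j h_j$ with $h_j$ the restriction of $h$ to the component of $\SN$ about $t_j$. The far part is benign: $\normLOne{\Khi*h_{\SF}}\le\normLOne{\Khi}\int_{\SF} d\abs{h}\lesssim\delta$, since the Fej\'er kernel is positive with unit integral. For each near piece, Taylor-expanding $\Khi(s-t)$ about $t=t_j$ to second order gives $\Khi*h_j = m_0^{(j)}\Khi(\cdot-t_j)-m_1^{(j)}\Khi'(\cdot-t_j)+R_j$, where the integral form of the remainder obeys $\normLOne{R_j}\le\tfrac12\normLOne{\Khi''}\int (t-t_j)^2\,d\abs{h_j}$. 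Using the elementary kernel estimates $\normLOne{\Khi}=1$, $\normLOne{\Khi'}\lesssim\fhi$, $\normLOne{\Khi''}\lesssim\fhi^2$ (immediate from the scaling $\Khi(t)=\fhi\,\Phi(\fhi t)$ for a fixed bounded profile $\Phi$ with integrable first two derivatives) and summing over $j$,
\[
\normLOne{\Khi*h}\;\lesssim\;\sum_j\abs{m_0^{(j)}}+\fhi\sum_j\abs{m_1^{(j)}}+\fhi^2\sum_j\int (t-t_j)^2\,d\abs{h_j}+\int_{\SF} d\abs{h}.
\]
The four terms are respectively $\lesssim\delta$, $\lesssim(\fhi/\flo)\,\delta=\srf\,\delta$, $\lesssim(\fhi/\flo)^2\,\delta=\srf^2\,\delta$, and $\lesssim\delta$; the dominant contribution is the second moment passing through $\Khi''$, and collecting the numerical constants yields the claim with $C_0$ absolute.

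Main obstacle. The delicate step is the concentration stage (ii), and specifically the construction of the auxiliary interpolating polynomials: they must have sup-norm $O(1)$ and, crucially, second derivatives small enough that their values at the off-spike locations $t_k$ ($k\neq j$) are swallowed by the global second-moment bound. This is precisely where the separation hypothesis $\Delta(T)\ge 2\lambdalo$ is indispensable, as it makes the interpolation problem well-conditioned and renders the inter-spike interactions negligible. By contrast, the kernel estimates of step (iii) and the reductions of step (i) are routine.
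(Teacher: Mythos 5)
Your proposal is correct, and its skeleton is the same as the paper's: the data enters only through $\abs{\PROD{q}{h}}\le 2\delta$ for low-pass $q$; the dual certificates built from the interpolation machinery of \cite{superres} yield concentration of $h$ around $T$ (far mass and $\flo^2$-weighted second moment $\lesssim\delta$, local masses $\lesssim\delta$, local first moments $\lesssim\delta/\flo$); and a second-order Taylor expansion of $\Khi$ about each spike converts these into the $L_1$ bound, with $\srf^2$ emerging as (size of $\Khi''$) $\times$ (second moment). Within this common skeleton, however, you make two organizational choices that genuinely differ from, and streamline, the paper's argument. First, you perform the near/far split at the single scale $\lambdalo$, whereas the paper splits at scale $\lambdahi$ for the Taylor step and at scale $\lambdalo$ for the certificates, and must then bridge the two scales: it needs the $O(\srf^2\delta)$ bound on $\normTV{P_{\SF^{\lambdahi}}\brac{h}}$ (second half of Lemma \ref{lemma:TC_bound}), the annulus estimate of Lemma \ref{lemma:int_t_dh_f_c}, and scale-$\lambdahi$ mass/moment bounds (Lemmas \ref{lemma:int_dh} and \ref{lemma:int_t_dh}) whose right-hand sides are themselves of size $\srf^2\delta$. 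Second, what makes your single-scale version viable is the integral form of the Taylor remainder, which after Fubini is controlled by $\tfrac12\normLOne{\Khi''}\int\brac{t-t_j}^2 d\abs{h_j}$ and is insensitive to the width of the near sets; the paper's windowed-sup remainder (conditions \eqref{eq:kernel_cond12}--\eqref{eq:kernel_cond3}, inequality \eqref{eq:gfp2}) degrades on windows of radius $\sim\lambdalo$ (the sup then carries the central peak $\sim\fhi^3$ of $\abs{\Khi''}$ across a window of length $\sim\lambdalo$, costing an extra factor of $\srf$), which is precisely why the paper is forced to Taylor-expand at scale $\lambdahi$. Your concentration step is also a mild variant: you interpolate the sign pattern of $x$ and use $\int\brac{1-\abs{q}}d\abs{\xest}\le 2\delta$, while the paper interpolates the sign of $h$ on $T$; both rest on the same certificate (Lemma \ref{lemma:dualpol}) and deliver the same estimates.

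Two points must be handled with care for your route to be complete. (a) The estimate $\normLOne{\Khi''}\lesssim\fhi^2$ is true, but your justification via the dilation $\Khi(t)=\fhi\,\Phi(\fhi t)$ is only heuristic for the \emph{periodic} Fej\'er kernel; an honest short proof writes $\Khi=\tfrac{1}{\fhi+1}s^2$ with $s(t)=\sin\brac{\pi\brac{\fhi+1}t}/\sin\brac{\pi t}$, so that $\Khi''=\tfrac{2}{\fhi+1}\brac{\brac{s'}^2+s\,s''}$, and applies Parseval and Cauchy--Schwarz to get $\normLOne{\Khi''}\le\tfrac{2}{\fhi+1}\brac{\normLTwo{s'}^2+\normLTwo{s}\normLTwo{s''}}\lesssim\fhi^2$. (Note this is exactly the estimate the paper avoids assuming, which is why it carries the two-scale bookkeeping.) (b) Your local masses and first moments must be extracted with a \emph{single} low-pass polynomial interpolating all the phases $v_j$ simultaneously---this is what Lemmas \ref{lemma:dualpol_extra} and \ref{lemma:int_t_dh_polynomial} provide, with conditioning uniform in $\abs{T}$ thanks to the separation hypothesis. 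Your wording, polynomials that interpolate ``at a single $t_j$ while vanishing to first order at the remaining $t_k$,'' each fed into $\abs{\PROD{q}{h}}\le 2\delta$, reads like one certificate per spike applied separately; that version loses a factor of $\abs{T}$ upon summing over $j$ and does not prove the theorem. The fix is the simultaneous construction, and your ``main obstacle'' paragraph correctly identifies the locally linear certificate (the paper's Lemma \ref{lemma:int_t_dh_polynomial}, which is genuinely new relative to \cite{superres}) as the place where the real work lies.
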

Thus, minimizing the total-variation norm subject to data constraints
yields a stable approximation of any superposition of Dirac measures
obeying the minimum-separation condition.  When $z=0$, setting
$\delta=0$ and letting \srf$\rightarrow \infty$, this recovers the
result in~\cite{superres} which shows that $\xest = x$, i.e.~we
achieve infinite precision. What is interesting here is the quadratic
dependence of the estimation error in the super-resolution factor.

We have chosen to analyze problem~\eqref{TVproblem_relaxed} and a perturbation with bounded $L_1$ norm for simplicity, but our techniques can be adapted to other recovery schemes and noise models. For instance, suppose we observe noisy
samples of the spectrum
\begin{equation}
  \label{eq:stochasticnoise}
\eta(k) = \int_{\mathbb{T}} e^{-i2\pi k t} \, x(d t) + \epsilon_k, \quad
k = -\flo, -\flo + 1, \ldots, \flo,
\end{equation}
where $\epsilon_k$ is an iid sequence of complex-valued
$\mathcal{N}(0, \sigma^2)$ variables (this means that the real and
imaginary parts are independent $\mathcal{N}(0, \sigma^2)$ variables).
This is equivalent to a line-spectra estimation problem with additive Gaussian white noise, as we explain below. In order to super-resolve the signal under this model, we propose the following convex program
\begin{equation}
\label{TVproblem_relaxed_L2}
\min_{\tilde x} \,  \normTV{\tilde x} 
\quad \text{subject to} \quad \normLTwo{\Qlo \tilde x - y} \leq \delta, 
\end{equation}
which can be implemented using off-the-shelf software as discussed in Section~\ref{sec:numerical}. A corollary to our main theorem establishes that with high probability solving this problem allows to super-resolve the signal despite the added perturbation with an error that scales with the square of the super-resolution factor and is proportional to the noise level.
\begin{corollary}
\label{cor:stochastic}
Fix $\gamma > 0$. Under the stochastic noise model
\eqref{eq:stochasticnoise}, the solution to problem~\eqref{TVproblem_relaxed_L2} with $\delta = \brac{1+\gamma}\sigma \sqrt{ 4 \floEq +2}$ obeys
\begin{equation}
\label{eq:stochastic}
\normLOne{\KhiEq*(\xestEq- x)} \leq C_0\, \brac{1+\gamma}\, \sqrt{4 \floEq +2} \, \srfeq^2 \, \sigma
\end{equation}
with probability at least $1-e^{-2 \floEq \gamma^2}$. 
\end{corollary}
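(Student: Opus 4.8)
The plan is to reduce Corollary~\ref{cor:stochastic} to Theorem~\ref{theorem:noise} by controlling the $L_2$-norm of the noise and then passing from the $L_2$ data constraint to the $L_1$ constraint. First I would identify the physical-domain noise attached to the spectral model~\eqref{eq:stochasticnoise}: forming the data $y(t) = \sum_{k=-\flo}^{\flo}\eta(k)e^{i2\pi kt}$ rewrites the model as $y = \Qlo x + z$ with bandlimited noise $z(t) = \sum_{k=-\flo}^{\flo}\epsilon_k e^{i2\pi kt}$. Since $z$ is bandlimited, Parseval's identity on $\mathbb{T}$ gives $\normLTwo{z}^2 = \sum_{k=-\flo}^{\flo}|\epsilon_k|^2$, so that $\normLTwo{z}$ equals $\sigma$ times the Euclidean norm of a standard Gaussian vector in $\R^{4\flo+2}$, each of the $2\flo+1$ complex coefficients contributing two independent real $\mathcal{N}(0,\sigma^2)$ components. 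This is exactly what makes the degree-of-freedom count $4\flo+2$ in the statement appear.

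Next I would establish the high-probability event $E = \{\normLTwo{z} \le \delta\}$ with $\delta = (1+\gamma)\sigma\sqrt{4\flo+2}$. Writing $\normLTwo{z} = \sigma\|h\|_2$ with $h\sim \mathcal{N}(0, I_{4\flo+2})$ and using that the map $g\mapsto\|g\|_2$ is $1$-Lipschitz, Gaussian concentration yields $\P(\|h\|_2 > \mathbb{E}\|h\|_2 + u) \le e^{-u^2/2}$. Combined with the Jensen bound $\mathbb{E}\|h\|_2 \le \sqrt{4\flo+2}$ and the choice $u = \gamma\sqrt{4\flo+2}$, this gives $\P(E^c) \le e^{-\gamma^2(2\flo+1)} \le e^{-2\flo\gamma^2}$, which matches the stated probability.

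On the event $E$ the remaining argument is a short deterministic reduction. Since $\normLTwo{\Qlo x - y} = \normLTwo{z} \le \delta$, the true signal $x$ is feasible for~\eqref{TVproblem_relaxed_L2}, so any solution $\xest$ obeys $\normTV{\xest} \le \normTV{x}$. Moreover, because $\mathbb{T}$ carries unit mass, Cauchy--Schwarz gives the norm comparison $\normLOne{\Qlo \xest - y} \le \normLTwo{\Qlo \xest - y} \le \delta$, so $\xest$ is also feasible for the $L_1$-constrained program~\eqref{TVproblem_relaxed} with the \emph{same} $\delta$ while still satisfying $\normTV{\xest}\le\normTV{x}$. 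The footnote version of Theorem~\ref{theorem:noise} then applies verbatim and yields $\normLOne{\Khi*(\xest - x)} \le C_0\,\srf^2\,\delta$; substituting $\delta = (1+\gamma)\sigma\sqrt{4\flo+2}$ produces~\eqref{eq:stochastic}.

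The only genuinely new ingredient is the concentration estimate, so I expect the main (if modest) obstacle to be lining up the constants in the Gaussian tail bound with the claimed exponent $e^{-2\flo\gamma^2}$: tracking that the relevant distribution has $4\flo+2$ real degrees of freedom, and that the slack $(1+\gamma)-1=\gamma$ between $\delta$ and $\mathbb{E}\normLTwo{z}$ feeds through the Lipschitz bound to give $e^{-\gamma^2(2\flo+1)}$, which one then weakens to $e^{-2\flo\gamma^2}$. Everything else is deterministic and rests on the inequality $\|\cdot\|_{L_1}\le\|\cdot\|_{\mathcal{L}_2}$ on the unit circle together with the already-established main theorem.
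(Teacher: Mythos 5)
Your proposal is correct and takes essentially the same approach as the paper: bound $\normTwo{\epsilon}=\normLTwo{\Qlo x - y}$ by $\delta$ with probability at least $1-e^{-2\flo\gamma^2}$ (the paper quotes a $\chi^2$ concentration inequality where you derive the same bound from Gaussian Lipschitz concentration plus Jensen), then reduce to the $L_1$ setting via $\normLOne{f}\leq\normLTwo{f}$ on the unit circle and invoke the main theorem. The only point to make explicit is that applying Theorem~\ref{theorem:noise} ``verbatim'' also requires its hypothesis $\normLOne{z}\leq\delta$, which holds on your event by the very same Cauchy--Schwarz step; equivalently, as the paper does, one can bypass this by directly verifying the two inequalities $\normTV{\xest}\leq\normTV{x}$ and $\normLOne{\Qlo\brac{\xest-x}}\leq 2\delta$ that the theorem's proof actually uses.
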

This result is proved in Section~\ref{sec:stochastic} of the appendix.

\subsection{Extensions}

{\em Other high-resolution kernels.} We work with the high-resolution
Fej\'er kernel but our results hold for any symmetric kernel that obeys the properties~\eqref{eq:kernel_cond12} and~\eqref{eq:kernel_cond3} below,
since our proof only uses these simple estimates. The first reads
\begin{equation} 
  \int_{\mathbb{T}} \abs{\Khi\brac{t}} \text{d}t \leq C_0,
  \qquad 
  \int_{\mathbb{T}} \abs{\Khi'\brac{t}} \text{d}t \leq
  C_1 \, {\lambdahi^{-1}}, \qquad \sup
  \abs{\Khi''\brac{t}} \leq
  C_2 \, \lambdahi^{-3}, \label{eq:kernel_cond12}
\end{equation}
where $C_0$, $C_1$ and $C_2$ are positive constants independent of
$\lambdahi$. The second is that there exists a nonnegative and
nonincreasing function $f : [0,1/2] \rightarrow \R$ such that 
\[
\abs{\Khi''\brac{t+\lambdahi}} \le f(t), \quad 0 \le t \le 1/2, 
\]
and
\begin{equation}
  \int_{0}^{1/2} f(t) \text{d}t \leq
  C_3 \, \lambdahi^{-2}. \label{eq:kernel_cond3}
\end{equation}
This is to make sure that \eqref{eq:gfp2} holds.  (For the Fej\'er
kernel, we can take $f$ to be quadratic in $\sqbr{0,1/2-\lambdahi}$
and constant in $\sqbr{1/2-\lambdahi,1/2}$.)

{\em Higher dimensions.} Our techniques can be applied to establish
robustness guarantees for the recovery of point sources in higher
dimensions.  The only parts of the proof of
Theorem~\ref{theorem:noise} that do not generalize directly are
Lemmas~\ref{lemma:dualpol},~\ref{lemma:dualpol_extra}
and~\ref{lemma:int_t_dh_polynomial}. However, the methods used to
prove these lemmas can be extended without much difficulty to multiple
dimensions as described in Section~\ref{sec:multidim} of the Appendix.

{\em Spectral line estimation.} Swapping time and frequency,
Theorem~\ref{theorem:noise} can be immediately applied to the
estimation of spectral lines in which we observe
\[
y(t) = \sum_j \alpha_j e^{i2\pi \omega_j t} + z(t), \quad t = 0, 1,
\ldots, n-1,  
\]
where $\alpha$ is a vector of complex-valued amplitudes and $z$ is a noise term. Here, our work implies that a non-parametric
method based on convex optimization is capable of approximating the
spectrum of a multitone signal with arbitrary frequencies, as long as
these frequencies are sufficiently far apart, and furthermore that the
reconstruction is stable. In this setting, the smoothed error quantifies the quality of the approximation windowed at a certain spectral
resolution.

\subsection{Related work}

Earlier work on the super-resolution problem in the presence of noise studied under which conditions recovery is not hopelessly ill-posed, establishing that sparsity is not sufficient even for signals supported on a grid~\cite{donohoSuperres,slepian_discrete}. More recently, \cite{batenkov_prony} studies the local stability of the problem in a continuous domain. These works, however, do not provide any tractable algorithms to perform recovery. 

Since at least the work of Prony~\cite{prony}, parametric methods based on polynomial rooting have been a popular approach to the super-resolution of trains of spikes and, equivalently, of line spectra. These techniques are typically based on the eigendecomposition of a sample covariance matrix of the data~\cite{music1,music2,esprit,blu_fri}. The theoretical analysis available for these methods is based on an asymptotic characterization of the sample covariance matrices under Gaussian noise~\cite{stoica_statistical,clergeot_music}, which unfortunately does not allow to obtain explicit guarantees on the recovery error beyond very simple cases involving one or two spikes. Other works extend these results to explore the trade-off between resolution and signal-to-noise ratio for the detection of two closely-spaced line spectra~\cite{milanfar_sinusoids} or light sources~\cite{helstrom,milanfar_imaging}. A recent reference~\cite{fannjiang_music}, which focuses mainly on the related problem of imaging point scatterers, analyzes the performance of a parametric method in the case of signals sampled randomly from a discrete grid under the assumption that the sample covariance matrix is close enough to the true one. 
In general, parametric techniques require prior knowledge of the model order and rely heavily on the assumption that the noise is white or at least has known spectrum (see Chapter 4 of~\cite{stoica_book}). An alternative approach that overcomes the latter drawback is to perform nonlinear least squares estimation of the model parameters~\cite{stoicaNLS}. Unfortunately, the resulting optimization problem has an extremely multimodal cost function, which makes it very sensitive to initialization~\cite{stoicaNLSmultimodal}. 

The total-variation norm is the continuous analog of the $\ell_1$ norm
for finite dimensional vectors, so our recovery algorithm can be
interpreted as finding the shortest linear combination---in an
$\ell_1$ sense---of elements taken from a continuous and infinite
dictionary. Previous theoretical work on the stability of this approach is limited to a discrete and finite-dimensional setting, where the support of the signal of interest is restricted to a finer uniform grid~\cite{superres}. Even if we discretize the dictionary, other stability
results for sparse recovery in redundant dictionaries do not apply due to the high
coherence between the elements. In addition, it is worth mentioning that working with a discrete
dictionary can easily degrade the quality of the
estimate~\cite{mismatch} (see~\cite{stoica_grid} for a related discussion concerning grid selection for spectral analysis), which highlights the importance of analyzing the problem in the continuous domain. This observation has spurred the appearance
of modified compressed-sensing techniques specifically tailored to the
task of spectral
estimation~\cite{spectral_cs,refinement,grids_fannjiang}. Proving
stability guarantees for these methods under conditions on the support
or the dynamic range of the signal is an interesting research
direction.

\section{Proof of Theorem \ref{theorem:noise}}
\label{sec:noise_proof}

It is useful to first introduce various objects we shall need in the
course of the proof. We let $T = \{t_j\}$ be the support of $x$ and
define the disjoint subsets
\begin{align*}
  \SN^{\lambda} \brac{j} & := \keys{ t  \; : \; \abs{t-t_j}\leq 0.16 \lambda},\\
  \SF^{\lambda} &:= \keys{ t \; : \; \abs{t-t_j}> 0.16 \lambda, \;
    \forall t_j \in T};
\end{align*}
here, $\lambda \in \keys{\lambdalo,\lambdahi}$, and $j$ ranges from 1
to $\abs{T}$. We write the union of the sets $\SN^{\lambda}\brac{j}$
as
\begin{align*}
  \SN^{\lambda} &:= \cup _{j=1}^{\abs{T}} \SN^{\lambda}\brac{j}
 \end{align*}
 and observe that the pair $(\SN^{\lambda}, \SF^{\lambda})$ forms a
 partition of $\mathbb{T}$.  The value of the constant $0.16$ is not
 important and chosen merely to simplify the argument. We denote the
 restriction of a measure $\mu$ with finite total variation on a set
 $S$ by $P_{S} \mu$ (note that in contrast we denote the low-pass
 projection in the frequency domain by $\Qlo$). This restriction is
 well defined for the above sets, as one can take the Lebesgue
 decomposition of $\mu$ with respect to a positive $\sigma$-finite
 measure supported on any of them \cite{rudin}. To keep some expressions in compact form, we set
 \begin{align*}
  I_{\SN^{\lambda} \brac{j}}\brac{\mu} &:=
  \frac{1}{\lambdalo^2}\int_{\SN^{\lambda} \brac{j}} \brac{t-t_j}^2
  \abs{\mu} \brac{\text{d} t},\\ I_{\SN^{\lambda}}\brac{\mu} &:= \sum_{t_j \in
    T} I_{\SN^{\lambda} \brac{j}}\brac{\mu}
 \end{align*}
 for any measure $\mu$ and $\lambda \in
 \keys{\lambdalo,\lambdahi}$. Finally, we reserve the symbol $C$ to
 denote a numerical constant whose value may change at each
 occurrence.

Set $h= x- \xest$. The error obeys 
\begin{align*}
  \normLOne{\Qlo h } \leq \normLOne{\Qlo x -y}+\normLOne{y-\Qlo
    \xest}\leq 2\delta,
\end{align*}
and has bounded total-variation norm since $\normTV{h}\leq
\normTV{x}+\normTV{\xest} \leq 2\normTV{x}$. Our aim is to bound the
$L_1$ norm of the smoothed error $e := \Khi * h$, 
\begin{align*}
 \normLOne{ e} & = \int_{\mathbb{T}} \abs{\int_{\mathbb{T}}\Khi\brac{t-\tau}  h\brac{\text{d}\tau} }\text{d}t.
\end{align*}

We begin with a lemma bounding the total-variation norm of $h$ `away'
from $T$. 
\begin{lemma}
\label{lemma:TC_bound}
Under the conditions
of Theorem \ref{theorem:noise}, there exist positive constants $C_a$
and $C_b$ such that
\begin{align*}
\normTVEq{P_{\SFEq^{\lambdaloEq}}\brac{h}} + I_{\SNEq^{\lambdaloEq}}\brac{h}   & \leq  C_a \, \delta,\\
\normTVEq{P_{\SFEq^{\lambdahiEq}}\brac{h}} & \leq  C_b \, \srfeq^2 \, \delta.
\end{align*}
\end{lemma}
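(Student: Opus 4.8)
The plan is to run a convex-duality argument powered by the dual certificate built in the companion paper~\cite{superres}. Under the separation hypothesis~\eqref{eq:min-distance}, that construction supplies a low-pass trigonometric polynomial $q$ (spectrum supported in $[-\flo,\flo]$) with $\normInf{q}\leq 1$, interpolating the signs of $x$ in the sense that $q(t_j)=\sign{a_j}$ for every $t_j\in T$, and obeying two quantitative estimates: a quadratic trough near the support, $1-\abs{q(t)}\geq C\,(t-t_j)^2/\lambdalo^2$ for $t\in\SN^{\lambdalo}\brac{j}$, and a uniform gap $1-\abs{q(t)}\geq c$ on $\SF^{\lambdalo}$. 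I expect this certificate to be the crux of the whole matter — it is precisely where the minimum-separation condition is consumed — but since it is available off the shelf I will treat it as given and concentrate on how to convert it into the stated bounds.

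First I would distill a single master inequality. Since $\normInf{q}\leq 1$, the polar decomposition of a finite measure $\mu$ gives $\normTV{\mu}-\Real{\PROD{\mu}{q}}=\int\brac{1-\Real{\bar q\,\sign{\mu}}}\,\text{d}\abs{\mu}\geq\int\brac{1-\abs{q}}\,\text{d}\abs{\mu}$. Applying this to $\mu=\xest=x-h$ and using $\PROD{x}{q}=\normTV{x}$ (a consequence of $q(t_j)=\sign{a_j}$) turns the optimality bound $\normTV{\xest}\leq\normTV{x}$ into $\int\brac{1-\abs{q}}\,\text{d}\abs{\xest}\leq\Real{\PROD{h}{q}}$. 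Because $q$ lies in the range of the self-adjoint low-pass projection $\Qlo$, one has $\PROD{h}{q}=\PROD{\Qlo h}{q}$, whence $\abs{\PROD{h}{q}}\leq\normInf{q}\,\normLOne{\Qlo h}\leq 2\delta$ using the feasibility bound $\normLOne{\Qlo h}\leq 2\delta$ recorded above. This yields the master estimate $\int\brac{1-\abs{q}}\,\text{d}\abs{\xest}\leq 2\delta$.

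Next I would localize. The near balls $\SN^{\lambdalo}\brac{j}$ are disjoint (again by~\eqref{eq:min-distance}) and each contains the single atom $t_j$; off $T$ one has $h=-\xest$, while at $t_j$ itself the weight $(t-t_j)^2$ vanishes, so $I_{\SN^{\lambdalo}}\brac{h}=I_{\SN^{\lambdalo}}\brac{\xest}$ and $\normTV{P_{\SF^{\lambdalo}}\brac{h}}=\normTV{P_{\SF^{\lambdalo}}\brac{\xest}}$. Splitting $\int\brac{1-\abs{q}}\,\text{d}\abs{\xest}$ into its near and far pieces and inserting the two certificate estimates — the quadratic trough on each near ball and the uniform gap on $\SF^{\lambdalo}$ — gives $C\,I_{\SN^{\lambdalo}}\brac{h}+c\,\normTV{P_{\SF^{\lambdalo}}\brac{h}}\leq 2\delta$, which is the first claimed inequality after renaming constants.

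Finally, for the high-resolution far set I would exploit that shrinking the near balls enlarges the far set: $\SF^{\lambdahi}$ is the disjoint union of $\SF^{\lambdalo}$ and the annuli $A_j:=\SN^{\lambdalo}\brac{j}\setminus\SN^{\lambdahi}\brac{j}$. On $A_j$ one has $(t-t_j)^2>(0.16\,\lambdahi)^2$, so $\normTV{P_{A_j}\brac{h}}\leq (0.16)^{-2}\,\srf^2\,I_{\SN^{\lambdalo}\brac{j}}\brac{h}$; summing over $j$ and adding the $\SF^{\lambdalo}$ piece, both already controlled by $C_a\,\delta$, produces $\normTV{P_{\SF^{\lambdahi}}\brac{h}}\leq C_b\,\srf^2\,\delta$ since $\srf\geq 1$. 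The only genuinely hard ingredient is the dual certificate of the first paragraph; granted that, the remaining steps are elementary convex duality together with a partition of the circle, and the quadratic dependence on $\srf$ emerges transparently from comparing the two length scales $\lambdalo$ and $\lambdahi$ on the annuli.
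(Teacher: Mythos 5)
Your proposal is correct, and it consumes exactly the same key ingredient as the paper: the low-frequency dual certificate of \cite{superres} (Lemma~\ref{lemma:dualpol} here), together with optimality $\normTV{\xest}\leq\normTV{x}$, feasibility $\normLOne{\Qlo h}\leq 2\delta$, and the near/far partition of $\mathbb{T}$. Where you genuinely diverge is in how the duality argument is assembled, in two respects. First, the paper certifies against the sign pattern of the \emph{error}: it polar-decomposes $P_{T}h$, sets $v_j=e^{-i\phi(t_j)}$, bounds $\normTV{P_{T}h}\leq 2\delta+\normTV{P_{T^c}h}-C_b\normTV{P_{\SF^{\lambdalo}}\brac{h}}-C_a I_{\SN^{\lambdalo}}\brac{h}$, and closes the loop with the null-space-type inequality $\normTV{P_{T^c}h}\leq\normTV{P_{T}h}$ extracted from minimality; you instead certify against $\sign{x}$ and obtain the master bound $\int\brac{1-\abs{q}}\,\text{d}\abs{\xest}\leq 2\delta$ in one stroke, never needing the quantity $\normTV{P_{T}h}$ at all (your identification $\abs{h}=\abs{\xest}$ away from the atoms, where the quadratic weight vanishes, makes this legitimate). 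Second, for the high-resolution bound the paper re-runs the certificate argument with $\lambdahi$-near sets, using \eqref{q0_2} on the annuli, whereas you derive it as a purely measure-theoretic corollary of the low-resolution bound via the decomposition $\SF^{\lambdahi}=\SF^{\lambdalo}\sqcup\bigsqcup_j\brac{\SN^{\lambdalo}\brac{j}\setminus\SN^{\lambdahi}\brac{j}}$ and the pointwise comparison $1\leq (t-t_j)^2/(0.16\,\lambdahi)^2$ on each annulus. Your organization is more modular: the certificate is consulted once, and the $\srf^2$ factor emerges from comparing the two scales after the fact. The paper's organization, in exchange, produces the intermediate inequalities it highlights as a generalization of the strong null-space property (of independent interest, and reused verbatim in the two-dimensional sketch of the appendix), and it additionally yields $I_{\SN^{\lambdahi}}\brac{h}\leq C\delta$, which your route does not directly give but which the lemma does not claim. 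Your constant bookkeeping, including the use of $\srf\geq 1$ to absorb the $\SF^{\lambdalo}$ contribution, is sound.
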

This lemma is proved in Section \ref{sec:TC_bound} and relies on the
existence of a low-frequency dual polynomial constructed in
\cite{superres} to guarantee exact recovery in the noiseless setting.

To develop a bound about $\|e\|_{L_1}$,
we begin by applying the triangle inequality to obtain
\begin{align}
\abs{e\brac{t}} = \abs{\int_{\mathbb{T}} \Khi\brac{t-\tau}  h\brac{\text{d}\tau}} & \leq \abs{\int_{\SF^{\lambdahi}}\Khi\brac{t-\tau}  h\brac{\text{d}\tau}} + \abs{\int_{\SN^{\lambdahi}} \Khi\brac{t-\tau} h\brac{\text{d}\tau}}. \label{split_integrand}
\end{align} 
By a corollary of the Radon-Nykodim Theorem (see Theorem 6.12 in
\cite{rudin}), it is possible to perform the polar decomposition $
P_{\SF^{\lambdahi}} \brac{h}\brac{\text{d}\tau} = e^{i 2 \pi \theta
  \brac{\tau}}\abs{P_{\SF^{\lambdahi}} \brac{h}}\brac{\text{d}\tau} $
such that $\theta\brac{\tau}$ is a real function and
$\abs{P_{\SF^{\lambdahi}} \brac{h}}$ is a positive measure. Then
\begin{align}
\int_{\mathbb{T}} \abs{\int_{\SF^{\lambdahi}}\Khi\brac{t-\tau}  h\brac{\text{d}\tau}}\text{d}t & \leq \int_{\mathbb{T}} \int_{\SF^{\lambdahi}} \abs{\Khi\brac{t-\tau}}  \abs{P_{\SF^{\lambdahi}} \brac{h}}\brac{\text{d}\tau}\text{d}t \notag\\
&=  \int_{\SF^{\lambdahi}}\brac{\int_{\mathbb{T}} \abs{\Khi\brac{t-\tau}}\text{d}t}  \abs{P_{\SF^{\lambdahi}} \brac{h}}\brac{\text{d}\tau} \notag\\
& \leq  C_0 \normTV{ P_{\SF^{\lambdahi}} \brac{h}}, \label{boundA}
\end{align}
where we have applied Fubini's theorem and \eqref{eq:kernel_cond12}
(note that the total-variation norm of $\abs{P_{\SF^{\lambdahi}}
  \brac{h}}$ is bounded by $2\normTV{x}<\infty$).

In order to control the second term in the right-hand side of
\eqref{split_integrand}, we use a first-order approximation of the
super-resolution kernel provided by the Taylor series expansion of
$\psi\brac{\tau}=\Khi\brac{t-\tau}$ around $t_j$: for any $\tau$ such
that $\abs{\tau-t_j}\leq 0.16 \lambdahi$, we have
\begin{align*}
  \abs{ \Khi\brac{t-\tau} - \Khi\brac{t-t_j} -
    \Khi'\brac{t-t_j}\brac{t_j - \tau}} & \leq \sup_{u : \abs{t-t_j-
      u} \leq 0.16 \lambdahi} \, \frac12 {\abs{\Khi''(u)}
    \brac{\tau-t_j}^2}.
\end{align*}
Applying this together with the triangle inequality, and setting
$t_j=0$ without loss of generality, give
\begin{multline}
  \int_{\mathbb{T}} \abs{\int_{\SN^{\lambdahi}\brac{j} } \Khi\brac{t-\tau} h\brac{\text{d}\tau}} \text{d}t  \leq  \int_{\mathbb{T}} \abs{\int_{\SN^{\lambdahi}\brac{j} } \Khi\brac{t} h\brac{\text{d}\tau}} \text{d}t \\
   + \int_{\mathbb{T}} \abs{\int_{\SN^{\lambdahi}\brac{j} } \Khi'\brac{t}\tau h\brac{\text{d}\tau}} \text{d}t 
   + \frac{1}{2}\int_{\mathbb{T}} \abs{\int_{\SN^{\lambdahi}\brac{j}}
    \sup_{\abs{t-u} \leq 0.16 \lambdahi} \abs{\Khi''\brac{u}} \tau^2
    |h|\brac{\text{d}\tau}} \text{d}t.  \label{boundB}
\end{multline}
(To be clear, we do not lose generality by setting $t_j=0$ since the
analysis is invariant by translation; in particular by a translation
placing $t_j$ at the origin. To keep things as simple as possible, we
shall make a frequent use of this argument.) We then combine Fubini's
theorem with \eqref{eq:kernel_cond12} to obtain
\begin{equation}
  \int_{\mathbb{T}} \abs{\int_{\SN^{\lambdahi}\brac{j} } \Khi\brac{t} h\brac{\text{d}\tau}} \text{d}t \leq  \int_{\mathbb{T}} \abs{\Khi\brac{t} }\text{d}t \abs{\int_{\SN^{\lambdahi}\brac{j} }  h\brac{\text{d}\tau}} 
  \leq C_0 \abs{\int_{\SN^{\lambdahi}\brac{j} }  h\brac{\text{d}\tau}}\label{boundErr0}
\end{equation}
and
\begin{equation}
  \int_{\mathbb{T}} \abs{\int_{\SN^{\lambdahi}\brac{j} } \Khi'\brac{t}\tau h\brac{\text{d}\tau}} \text{d}t \leq  \int_{\mathbb{T}} \abs{\Khi'\brac{t} }\text{d}t \abs{\int_{\SN^{\lambdahi}\brac{j} } \tau h\brac{\text{d}\tau}}
  \leq \frac{C_1}{\lambdahi} \abs{\int_{\SN^{\lambdahi}\brac{j} }  \tau h\brac{\text{d}\tau}}\label{boundErr1}.
\end{equation}
Some simple calculations show that \eqref{eq:kernel_cond12} and
\eqref{eq:kernel_cond3} imply
\begin{align}
\int_{\mathbb{T}} \sup_{\abs{t-u} \leq 0.16 \lambdahi} \abs{\Khi''\brac{u}} \text{d}t & \leq \frac{C_4}{ \lambdahi^2} \label{eq:gfp2}
\end{align}
for a positive constant $C_4$. This together with Fubini's theorem
yield
\begin{align}
  \int_{\mathbb{T}} \abs{\int_{\SN^{\lambdahi}\brac{j}} \abs{\Khi''\brac{u}} \tau^2 |h|\brac{\text{d}\tau}} \text{d}t & \leq  \int_{\mathbb{T}} \sup_{\abs{t-u} \leq 0.16 \lambdahi} \abs{\Khi''\brac{t} }\text{d}t \abs{\int_{\SN^{\lambdahi}\brac{j} } \tau^2 |h|\brac{\text{d}\tau}} \notag \\
  & \leq C_4 \, \srf^2 \,
  I_{\SN^{\lambdahi}\brac{j}}\brac{h}\label{boundErr2}
\end{align}
for any $u$. 
In order to make use of these bounds, it is
necessary to control the local action of the measure $h$ on a constant
and a linear function. The following two lemmas are proved in
Sections~\ref{sec:int_dh} and~\ref{sec:int_t_dh}.
\begin{lemma}
\label{lemma:int_dh}
Take $T$ as in Theorem~\ref{theorem:noise} and any measure $h$ obeying
$\normLOne{\QloEq h} \leq 2 \delta$. Then
\[
\sum_{t_j \in T} \abs{ \int_{\SNEq^{\lambdahiEq}\brac{j} }
  h\brac{ \dEq \tau}} \leq 2\delta + \normTVEq{P_{\SFEq^{\lambdahiEq}}
  \brac{h}} + C \, I_{\SNEq^{\lambdahiEq}}\brac{h}.
  \]
\end{lemma}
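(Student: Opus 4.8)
The plan is to bound the sum by a duality argument anchored on a low-frequency dual polynomial. Write $v_j := \int_{\SN^{\lambdahi}\brac{j}} h\brac{\text{d}\tau}$ for the local mass of $h$ near $t_j$, and choose unit-modulus numbers $u_j := \overline{\sign{v_j}}$ so that $u_j v_j = \abs{v_j}$; the quantity to control is then the real number $\sum_{t_j \in T} u_j v_j$. The key tool is the construction of~\cite{superres}: since $T$ obeys the separation condition~\eqref{eq:min-distance}, there is a trigonometric polynomial $q$ with frequencies confined to $\sqbr{-\flo,\flo}$ obeying $q\brac{t_j}=u_j$, $q'\brac{t_j}=0$, and $\normInf{q}\leq 1$.

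Next I would relate $\int q\,\text{d}h$ to the two quantities we are allowed to use. Because $q$ is bandlimited to $\sqbr{-\flo,\flo}$ and $\Qlo$ acts as the identity on those frequencies, Parseval's identity gives $\int q\,\text{d}h = \int q\,\brac{\Qlo h}\,\text{d}t$, whence $\abs{\int q\,\text{d}h} \leq \normInf{q}\,\normLOne{\Qlo h}\leq 2\delta$. Splitting the integral along the partition $\brac{\SN^{\lambdahi},\SF^{\lambdahi}}$ and using $q\brac{t_j}=u_j$ to subtract off the local masses yields the identity
\[
\sum_{t_j \in T} u_j v_j = \int q\,\text{d}h - \int_{\SF^{\lambdahi}} q\,\text{d}h - \sum_{t_j \in T} \int_{\SN^{\lambdahi}\brac{j}} \brac{q\brac{\tau}-q\brac{t_j}}\, h\brac{\text{d}\tau}.
\]
The far term is at once bounded by $\normInf{q}\,\normTV{P_{\SF^{\lambdahi}}\brac{h}} \leq \normTV{P_{\SF^{\lambdahi}}\brac{h}}$.

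The remaining near term is where the quadratic moment $I_{\SN^{\lambdahi}}$ appears. The flatness $q'\brac{t_j}=0$ annihilates the linear term of the Taylor expansion of $q$ about $t_j$, so that $\abs{q\brac{\tau}-q\brac{t_j}} \leq \tfrac12 \sup\abs{q''}\,\brac{\tau-t_j}^2$ for $\tau \in \SN^{\lambdahi}\brac{j}$. Bernstein's inequality for polynomials of degree $\flo$ bounds $\sup\abs{q''}$ by $\brac{2\pi\flo}^2 = C\lambdalo^{-2}$, and since the factor $\lambdalo^{-2}$ is precisely the normalization built into $I_{\SN^{\lambdahi}}$, this contribution is at most $C\, I_{\SN^{\lambdahi}}\brac{h}$. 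Combining the three estimates and invoking the triangle inequality on the displayed identity proves the claim.

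I expect the main obstacle to be conceptual rather than computational: one must produce a single polynomial that simultaneously interpolates arbitrary prescribed signs, is flat at each $t_j$, and stays bounded by one in sup-norm. These are exactly the properties guaranteed by the dual polynomial of~\cite{superres} under the minimum-separation hypothesis, so this step reduces to a citation. The secondary subtlety is that only the second-order flatness $q'\brac{t_j}=0$ lets the error enter through the quadratic functional $I_{\SN^{\lambdahi}}$ rather than through a weaker first-moment quantity that would not be controlled elsewhere in the proof.
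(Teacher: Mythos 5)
Your proof is correct, and while it follows the same skeleton as the paper's argument (polar decomposition of the local masses into unit phases, a low-frequency dual polynomial interpolating those phases, the bound $\abs{\int q\,\text{d}h} \leq \normInf{q}\normLOne{\Qlo h} \leq 2\delta$, and a near/far split along $(\SN^{\lambdahi},\SF^{\lambdahi})$), it diverges in how the crucial near-field estimate is obtained. The paper gets the quadratic closeness bound $\abs{q(\tau)-v_j} \leq C(\tau-t_j)^2/\lambdalo^2$ from a separate result (Lemma~\ref{lemma:dualpol_extra}), whose proof in the appendix re-opens the explicit squared-Fej\'er interpolation construction of~\cite{superres}: it bounds the coefficient vectors $\alpha,\beta$ and sums of kernel derivatives to control $w''$ where $w = q - v_j$. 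You instead derive the same bound abstractly from three properties of the certificate --- $q(t_j)=v_j$, $q'(t_j)=0$, $\normInf{q}\leq 1$, together with the degree bound --- via Taylor expansion and Bernstein's inequality $\normInf{q''}\leq (2\pi\flo)^2\normInf{q}$. This is more elementary and more modular: it applies to any unit-bounded degree-$\flo$ certificate that is stationary at the spikes, with no reference to how $q$ was built. What the paper's route buys in exchange is explicit control of constants and a template that they reuse verbatim for Lemma~\ref{lemma:int_t_dh_polynomial}, where the target is a locally linear (rather than locally constant) profile and a pure sup-norm argument no longer suffices. One point of care in your write-up: the property $q'(t_j)=0$ is not part of the statement of Lemma~\ref{lemma:dualpol} as recalled in the paper; it is a feature of the construction in~\cite{superres} (recorded in the paper's appendix), so your citation should be to the construction itself --- and, as you correctly note, it cannot be dispensed with, since the bounds \eqref{q0_2}--\eqref{q0_3} alone would permit a phase rotation of $q$ that is linear in $t-t_j$, which would leave only a first-moment estimate that nothing else in the proof controls.
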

\begin{lemma}
\label{lemma:int_t_dh}
Take $T$ as in Theorem~\ref{theorem:noise} and any measure $h$ obeying
$\normLOne{\QloEq h} \leq 2 \delta$. Then 
 \[
 \sum_{t_j \in T} \abs{\int_{\SNEq^{\lambdahiEq}\brac{j} } \brac{\tau-t_j}
   h\brac{ \dEq \tau}} \leq C \brac{ \lambdaloEq \, \delta +
   \lambdaloEq \, \normTVEq{P_{\SFEq^{\lambdaloEq}} \brac{h}} + \lambdaloEq \,
   I_{\SNEq^{\lambdaloEq} }\brac{h} + \lambdahiEq \, \srfeq^2 \,
   I_{\SNEq^{\lambdaloEq} }\brac{h} }.
 \]
\end{lemma}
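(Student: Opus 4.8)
The plan is to reduce the estimate to a single interpolation inequality. The measure $h$ is only controlled through $\normLOne{\Qlo h}\le 2\delta$, and the one fact that converts this into a usable tool is that for \emph{any} trigonometric polynomial $q$ with frequencies in $[-\flo,\flo]$ one has the identity $\int_{\mathbb{T}} q\,\text{d}h = \int_{\mathbb{T}} q(t)\,(\Qlo h)(t)\,\text{d}t$; this holds because the Dirichlet kernel satisfies $\Klohat(k)=1$ for $|k|\le\flo$, so the polynomial only sees the low-pass part of $h$. Consequently $\abs{\int q\,\text{d}h}\le \normInf{q}\,\normLOne{\Qlo h}\le 2\delta\,\normInf{q}$. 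I would therefore build a low-frequency polynomial $q$ that behaves, near each $t_j$, like the linear function whose action we want to bound.

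Concretely, choose unit-modulus numbers $v_j$ so that $v_j\int_{\SN^{\lambdahi}\brac{j}}(\tau-t_j)\,h(\text{d}\tau)=\abs{\int_{\SN^{\lambdahi}\brac{j}}(\tau-t_j)\,h(\text{d}\tau)}$, and invoke the dual-certificate construction of~\cite{superres} (available precisely because $\Delta(T)\ge 2\lambdalo$) to produce a polynomial $q$ with frequencies in $[-\flo,\flo]$ obeying $q(t_j)=0$ and $q'(t_j)=v_j$, together with the scale-correct bounds $\normInf{q}\le C\lambdalo$, $\sup_t\abs{q'(t)}\le C$ and $\sup_t\abs{q''(t)}\le C\lambdalo^{-1}$. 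Splitting $\int q\,\text{d}h$ along the partition $\mathbb{T}=\bigl(\cup_j\SN^{\lambdahi}\brac{j}\bigr)\cup\SF^{\lambdahi}$ and using the Taylor estimate $\abs{q(\tau)-v_j(\tau-t_j)}\le \tfrac12\sup\abs{q''}\,(\tau-t_j)^2$ on each near set, the linear terms sum by construction to $\sum_j\abs{\int_{\SN^{\lambdahi}\brac{j}}(\tau-t_j)\,h(\text{d}\tau)}$. Rearranging yields
\[
\sum_{t_j\in T}\abs{\int_{\SN^{\lambdahi}\brac{j}}(\tau-t_j)\,h(\text{d}\tau)} \le \abs{\int_{\mathbb{T}}q\,\text{d}h} + \abs{\int_{\SF^{\lambdahi}}q\,\text{d}h} + \tfrac12\sup_t\abs{q''(t)}\sum_{t_j\in T}\int_{\SN^{\lambdahi}\brac{j}}(\tau-t_j)^2\,\abs{h}(\text{d}\tau).
\]

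It then remains to bound the three terms, which is where the four terms of the lemma and the super-resolution factor appear. The first is at most $2\delta\normInf{q}\le C\lambdalo\delta$. The third is at most $\tfrac12 C\lambdalo^{-1}\cdot\lambdalo^2\,I_{\SN^{\lambdahi}}\brac{h}\le C\lambdalo\,I_{\SN^{\lambdalo}}\brac{h}$, using $\SN^{\lambdahi}\brac{j}\subseteq\SN^{\lambdalo}\brac{j}$. The delicate term is the far-set integral, which I would split as $\SF^{\lambdahi}=\SF^{\lambdalo}\cup\bigl(\cup_j A_j\bigr)$ with $A_j:=\SN^{\lambdalo}\brac{j}\setminus\SN^{\lambdahi}\brac{j}$. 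On $\SF^{\lambdalo}$ the crude bound $\normInf{q}\le C\lambdalo$ gives $C\lambdalo\normTV{P_{\SF^{\lambdalo}}\brac{h}}$. On the annulus $A_j$ one has $0.16\lambdahi<\abs{\tau-t_j}\le 0.16\lambdalo$, and since $q(t_j)=0$, $\abs{q(\tau)}\le C\abs{\tau-t_j}\le C(\tau-t_j)^2/(0.16\lambdahi)$; integrating converts this linear growth into the quadratic weight and produces $C\lambdalo^2\lambdahi^{-1}\,I_{\SN^{\lambdalo}}\brac{h}=C\lambdahi\,\srf^2\,I_{\SN^{\lambdalo}}\brac{h}$, exactly the term carrying the square of the super-resolution factor. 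Collecting the four contributions gives the claim.

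The step I expect to be the main obstacle is not the bookkeeping above but the construction of the interpolating polynomial $q$ with the three simultaneous bounds on $\normInf{q}$, $\abs{q'}$ and $\abs{q''}$; this is precisely where the minimum-separation hypothesis $\Delta(T)\ge 2\lambdalo$ is used, and it rests on the explicit low-frequency dual certificate developed in the companion paper~\cite{superres}. The annulus estimate is the conceptual heart of the argument, since it is the mechanism by which the mismatch between the low-resolution scale $\lambdalo$ (at which $h$ is constrained) and the high-resolution scale $\lambdahi$ (at which we must test $h$) generates the factor $\srf^2$.
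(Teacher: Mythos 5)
Your proposal is correct and follows essentially the same route as the paper: the same low-frequency certificate with $q(t_j)=0$, $q'(t_j)=v_j$ (the paper's Lemma~\ref{lemma:int_t_dh_polynomial}), the same duality bound $\abs{\int_{\mathbb{T}} q \, h(\dEq\tau)}\le\normInf{q}\,\normLOne{\QloEq h}$, and the same annulus mechanism $\abs{\tau-t_j}\le(\tau-t_j)^2/(0.16\,\lambdahiEq)$ that generates the $\srfeq^2$ term (the paper's Lemma~\ref{lemma:int_t_dh_f_c}). The differences are organizational rather than substantive---the paper first splits each linear integral into its $\SNEq^{\lambdaloEq}$ part plus an annulus part and applies the certificate at scale $\lambdaloEq$, whereas you apply the certificate at scale $\lambdahiEq$ and split the far-set integral $\SFEq^{\lambdahiEq}$ instead; moreover, the global bounds you request on $q'$ and $q''$ are only ever used near $T$, where the paper's local properties \eqref{q1_1}--\eqref{q1_2} already suffice.
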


We may now conclude the proof of our main theorem.  Indeed, the
inequalities \eqref{boundA}, \eqref{boundB}, \eqref{boundErr0},
\eqref{boundErr1} and \eqref{boundErr2} together with $
I_{\SN^{\lambdahi} }\brac{h} \leq I_{\SN^{\lambdalo} }\brac{h}$ imply
\[
\normLOne{e} \leq C \brac{ \srf \, \delta +
  \normTV{P_{\SF^{\lambdahi}} \brac{h}} + \srf \,
  \normTV{P_{\SF^{\lambdalo}} \brac{h}} + \srf^2 \, I_{\SN^{\lambdalo}
  }\brac{h} } \leq C\;\srf^2 \delta,
\]
where the second inequality follows from Lemma~\ref{lemma:TC_bound}.

\subsection{Proof of Lemma \ref{lemma:TC_bound}}
\label{sec:TC_bound}

The proof relies on the existence of a certain low-frequency
polynomial, characterized in the following lemma which recalls results from Proposition 2.1 and Lemma 2.5 in
\cite{superres}.
\begin{lemma}
\label{lemma:dualpol}
Suppose $T$ obeys the separation condition \eqref{eq:min-distance} and
take any $v \in \C^{|T|}$ with $|v_j| = 1$. Then there exists a
low-frequency trigonometric polynomial
\begin{equation*}
q(t) = \sum_{k = -\floEq}^{\floEq} c_k e^{i2\pi k t} 
\end{equation*}
obeying the following properties: 
\begin{align}
q(t_j) & = v_j, \quad t_j \in T, \label{q0_1}\\
|q(t)| & \leq 1-\frac{C_a  \brac{t-t_j}^2}{\lambdaloEq^2} ,  \quad  t \in \SNEq^{\lambdaloEq}\brac{j}, \label{q0_2}\\
|q(t)| & < 1-C_b ,  \quad  t \in \SFEq^{\lambdaloEq}, \label{q0_3}
\end{align}
with $0<C_b \leq 0.16^2 C_a<1$.
\end{lemma}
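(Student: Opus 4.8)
The statement merely repackages the dual-certificate construction of \cite{superres} (Proposition 2.1 and Lemma 2.5), so the plan is to reproduce that interpolation argument. The idea is to build $q$ explicitly from a fixed low-pass \emph{interpolation kernel} $\Psi$, bandlimited to $[-\flo,\flo]$, even, concentrated near the origin, normalized so that $\Psi(0)=1$ and $\Psi'(0)=0$, with $\Psi''(0)<0$ of order $\lambdalo^{-2}$, and with $\Psi,\Psi',\Psi''$ all decaying away from $0$ (a fourth power of a Fej\'er-type kernel works). I would set
\[
q(t) = \sum_{t_k \in T} \alpha_k\, \Psi(t-t_k) + \beta_k\, \Psi'(t-t_k),
\]
which is automatically a trigonometric polynomial of degree $\flo$, and fix the $2\abs{T}$ complex coefficients $\keys{\alpha_k,\beta_k}$ by imposing the $2\abs{T}$ interpolation conditions $q(t_j)=v_j$ and $q'(t_j)=0$ for every $t_j\in T$. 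The vanishing-derivative conditions are the crucial ingredient: together with $\abs{v_j}=1$ they force $\abs{q}$ to attain a local maximum equal to $1$ at each $t_j$, which is exactly what yields the quadratic decay \eqref{q0_2}.

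Next I would show the interpolation system is solvable. Writing it in block form with Gram-type blocks $(D_0)_{jk}=\Psi(t_j-t_k)$, $(D_1)_{jk}=\Psi'(t_j-t_k)$ and $(D_2)_{jk}=\Psi''(t_j-t_k)$ (suitably normalized), the minimum-separation hypothesis $\Delta(T)\ge 2\lambdalo$ guarantees that the diagonal entries dominate: each off-diagonal term involves $\Psi$ or one of its derivatives evaluated at a distance $\ge 2\lambdalo$, where the kernel is already small, and summing these tail contributions over $T$ (again using the spacing of the points) keeps the total off-diagonal mass below a fixed fraction of the diagonal. A Schur-complement / Neumann-series argument then gives invertibility and, more importantly, the quantitative estimates $\alpha_j \approx v_j$ and $\beta_j \approx 0$, together with $\ell_\infty$ control on the full coefficient vectors; these in turn bound $q$, $q'$ and $q''$ uniformly on $\mathbb{T}$.

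With the coefficients controlled, the two pointwise estimates follow by splitting into $\SN^{\lambdalo}$ and $\SF^{\lambdalo}$. On $\SN^{\lambdalo}\brac{j}$ I would Taylor-expand around $t_j$: since $q(t_j)=v_j$ and $q'(t_j)=0$, a second-order expansion gives $\abs{q(t)}^2 \le 1 + \Real{\overline{v_j}\, q''(t_j)}\,(t-t_j)^2 + \dots$, and because $q''(t_j)$ is dominated by the term $\alpha_j\Psi''(0)\approx v_j\Psi''(0)$ with $\Psi''(0)<0$ of size of order $\lambdalo^{-2}$, the leading correction is negative of order $-(t-t_j)^2/\lambdalo^2$; taking square roots via $\sqrt{1-x}\le 1-x/2$ produces \eqref{q0_2} with an explicit $C_a$ (the cubic remainder being dominated on a neighborhood of radius $0.16\lambdalo$). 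On $\SF^{\lambdalo}$ the interpolant sits away from all its peaks, so I would bound $\abs{q(t)}$ directly by summing the kernel and derivative tails, obtaining $\abs{q(t)}<1-C_b$. Comparing the near bound at the boundary $\abs{t-t_j}=0.16\lambdalo$ with the far bound then forces the advertised relation $0<C_b\le 0.16^2 C_a<1$.

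The main obstacle is the quantitative bookkeeping in the invertibility step: one must sum the tails of $\Psi,\Psi',\Psi''$ over all of $T$ and show the resulting perturbation of the identity is small enough, \emph{with explicit constants}, so that the peak value is exactly $1$, the curvature at each $t_j$ has the correct sign and magnitude, and the two regimes match to give $C_b\le 0.16^2 C_a$. This is precisely the delicate computation carried out in \cite{superres}; since the separation hypothesis and the interpolation data $v$ here are identical, the cleanest route for the present paper is simply to invoke Proposition 2.1 and Lemma 2.5 there.
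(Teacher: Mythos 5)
Your proposal is correct and takes essentially the same route as the paper: the paper gives no independent proof of this lemma, but simply recalls it from Proposition 2.1 and Lemma 2.5 of \cite{superres}, which is exactly the conclusion you reach. Your sketch of the underlying construction---interpolation with the kernel \eqref{def:kernel} and its derivative, Neumann-series invertibility of the interpolation system under the separation condition, and the near/far case analysis---faithfully reflects the argument in \cite{superres}, which the paper's appendices reproduce in detail only for the related Lemmas~\ref{lemma:dualpol_extra} and~\ref{lemma:int_t_dh_polynomial}.
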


Invoking a corollary of the Radon-Nykodim Theorem (see Theorem 6.12 in \cite{rudin}), it is possible to perform a polar decomposition of $\PTn h$,
\begin{align*}
 \PTn h = e^{i \phi \brac{t}}\abs{\PTn h}\text{,}
\end{align*}
such that $\phi\brac{t}$ is a real function defined on $\mathbb{T}$. To prove Lemma \ref{lemma:TC_bound}, we work with $v_j = e^{-i \phi(t_j)}$. Since
$q$ is low frequency,
\begin{align}
\abs{\int_{\mathbb{T}} q(t) \text{d}h (t)} & = \abs{\int_{\mathbb{T}} q(t) \Qlo h\brac{t} \text{d}t} \leq \normInf{q} \normLOne{\Qlo h} \leq 2 \delta. \label{eq:int_q}
\end{align}
Next, since $q$ interpolates $e^{-i \phi \brac{t}}$ on $T$,
\begin{align}
  \normTV{ \PTn h } = \int_{\mathbb{T}} q(t) \PTn h\brac{\text{d}t} & \leq \abs{\int_{\mathbb{T}} q(t) h\brac{\text{d}t}} + \abs{\int_{T^c}  q(t) h\brac{\text{d}t}} \notag \\
  & \leq 2\delta + \sum_{j\in T}
  \abs{\int_{\SN^{\lambdalo}\brac{j}\setminus\keys{t_j} } q(t)
    h\brac{\text{d}t}}+ \abs{\int_{\SF^{\lambdalo}} q(t)
    h\brac{\text{d}t}}. \label{PTh_1c}
\end{align}
Applying \eqref{q0_3} in Lemma \ref{lemma:dualpol} and H\"older's
inequality, we obtain
\begin{align}
\abs{\int_{\SF^{\lambdalo}}  q(t) h\brac{\text{d}t}} & \leq \normInf{P_{\SF^{\lambdalo}}q} \normTV{P_{\SF^{\lambdalo}}\brac{h}} \notag\\
& \leq \brac{1-C_b}\normTV{P_{\SF^{\lambdalo}}\brac{h}}.  \label{PTh_2c}
\end{align}
Set $t_j=0$ without loss of generality. The triangle inequality and
\eqref{q0_2} in Lemma~\ref{lemma:dualpol} yield
\begin{align}
\abs{\int_{\SN^{\lambdalo}\brac{j}\setminus\keys{0}}  q(t) h\brac{\text{d}t}} & \leq  \int_{\SN^{\lambdalo}\brac{j}\setminus\keys{0}}  \abs{q(t)} \abs{h}\brac{\text{d}t} \notag \\
& \leq  \int_{\SN^{\lambdalo}\brac{j}\setminus\keys{0}}  \brac{1-\frac{C_a  t^2}{\lambdalo^2}} \abs{h}\brac{\text{d}t} \notag \\
& \leq \int_{\SN^{\lambdalo}\brac{j}\setminus\keys{0}} \abs{h}\brac{\text{d}t} - C_a I_{\SN^{\lambdalo} \brac{j}}\brac{h} . \label{PTh_3c}
\end{align}
Combining \eqref{PTh_1c}, \eqref{PTh_2c} and \eqref{PTh_3c} gives 
\begin{align*}
\normTV{ \PTn h } \leq &  2\delta + \normTV{ \PTcn h } - C_b  \normTV{P_{\SF^{\lambdalo}}\brac{h}}-  C_a I_{\SN^{\lambdalo}}\brac{h}.
\end{align*}

Observe that we can substitute $\lambdalo$
with $\lambdahi$ in \eqref{PTh_1c} and \eqref{PTh_3c} and obtain 
\[
\normTV{ \PTn h } \leq   2\delta + \normTV{ \PTcn h } - 0.16^2 \, C_a \, \srf^{-2}  \normTV{P_{\SF^{\lambdahi}}\brac{h}}-  C_a I_{\SN^{\lambdahi}}\brac{h}.
\]
This follows from using \eqref{q0_2} instead of \eqref{q0_3} to bound the magnitude of $q$ on $\SF^{\lambdahi}$.  

These inequalities can be interpreted as a generalization of the
strong null-space property used to obtain stability guarantees for
super-resolution on a discrete grid (see Lemma 3.1 in
\cite{superres}).  Combined with the fact that $\hat{x}$ has minimal
total-variation norm among all feasible points, they yield
\begin{align*}
\normTV{ x } & \geq  \normTV{ x + h }\\
& \geq  \normTV{ x } - \normTV{ \PTn h } +  \normTV{\PTcn h } \\
& \geq  \normTV{ x } -2\delta + C_b  \normTV{P_{\SF^{\lambdalo}}\brac{h}} +  C_a I_{\SN^{\lambdalo}}\brac{h}.
\end{align*}
As a result, we conclude that
\begin{align*}
C_b \normTV{P_{\SF^{\lambdalo}}\brac{h}} + C_a I_{\SN^{\lambdalo}}\brac{h} \leq  2 \delta, 
\end{align*}
and by the same argument,
\begin{align*}
 0.16^2 \, C_a \, \srf^{-2} \normTV{P_{\SF^{\lambdahi}}\brac{h}} + C_a I_{\SN^{\lambdahi}}\brac{h} \leq  2 \delta.
\end{align*}
This finishes the proof. 

\subsection{Proof of Lemma \ref{lemma:int_dh}}
\label{sec:int_dh}

The proof of this lemma relies upon the low-frequency polynomial from Lemma~\ref{lemma:dualpol} and the fact that $q(t)$ is close to the chosen sign pattern when $t$ is near any element of the support. The following intermediate result is proved in
Section~\ref{sec:dualpol_extra} of the Appendix.
\begin{lemma}
\label{lemma:dualpol_extra}
There is a polynomial $q$ satisfying the properties from
Lemma~\ref{lemma:dualpol} and, additionally,
\begin{align*}
  |q(t)-v_j| & \leq \frac{C \brac{t-t_j}^2}{\lambdaloEq^2} , \quad
  \text{ for all } t \in \SNEq^{\lambdaloEq}\brac{j}.
 \end{align*}
\end{lemma}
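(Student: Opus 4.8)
The plan is to leverage the \emph{explicit} interpolation construction of $q$ from \cite{superres}. Beyond the sign-interpolation property \eqref{q0_1}, that construction also forces the derivative to vanish at the support, $q'(t_j)=0$ for every $t_j\in T$; this is exactly what makes $\abs{q}$ peak at each $t_j$ and is the source of the quadratic decay in \eqref{q0_2}. Once both interpolation conditions are available, the additional estimate is essentially a second-order Taylor expansion, so the only quantitative input that remains is a uniform bound on the second derivative of $q$ near the support.

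First I would recall the form of the certificate. In \cite{superres}, $q$ is built as $q(t)=\sum_{t_k\in T}\alpha_k\,\Phi(t-t_k)+\beta_k\,\Phi'(t-t_k)$, where $\Phi$ is a fixed bandlimited kernel of width comparable to $\lambdalo$ (a power of the Fej\'er kernel), and the coefficient vectors $\alpha,\beta$ are defined as the solution of the linear system imposing $q(t_j)=v_j$ and $q'(t_j)=0$ for all $j$. Under the separation hypothesis \eqref{eq:min-distance}, \cite{superres} shows that this system is invertible and that its solution obeys bounds of the form $\abs{\alpha_k}\le C$ and $\abs{\beta_k}\le C\,\lambdalo$.

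Second, I would differentiate twice, $q''(t)=\sum_k\alpha_k\,\Phi''(t-t_k)+\beta_k\,\Phi'''(t-t_k)$, and bound each contribution. Since $\Phi$ has width $\lambdalo$, its derivatives scale as $\sup\abs{\Phi''}\le C\lambdalo^{-2}$ and $\sup\abs{\Phi'''}\le C\lambdalo^{-3}$, so that $\abs{\alpha_k\Phi''}\le C\lambdalo^{-2}$ and $\abs{\beta_k\Phi'''}\le C\lambdalo^{-2}$; the minimum-separation condition together with the rapid decay of $\Phi$ make the sums $\sum_k\abs{\Phi''(t-t_k)}$ and $\sum_k\abs{\Phi'''(t-t_k)}$ converge uniformly in $t$. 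Combining these yields the uniform bound $\sup_t\abs{q''(t)}\le C\lambdalo^{-2}$.

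Finally, I would conclude using the integral form of Taylor's remainder: for $t\in\SN^{\lambdalo}\brac{j}$, the conditions $q(t_j)=v_j$ and $q'(t_j)=0$ give
\[
q(t)-v_j=\int_{t_j}^{t}(t-s)\,q''(s)\,\text{d}s,
\]
whence $\abs{q(t)-v_j}\le\tfrac{1}{2}\,\brac{\sup_t\abs{q''(t)}}\,\brac{t-t_j}^2\le C\,\brac{t-t_j}^2/\lambdalo^2$, as claimed. The main obstacle is the uniform second-derivative bound of the third step; however, estimates of exactly this type are already carried out in \cite{superres} to establish \eqref{q0_2}--\eqref{q0_3} --- indeed the quadratic decay of $\abs{q}$ there is driven by precisely this control of $q''$ --- so here the task is to assemble the relevant coefficient and kernel-derivative bounds rather than to derive new ones.
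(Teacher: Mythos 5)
Your proposal is correct and follows essentially the same route as the paper's own proof: both use the explicit certificate $q(t)=\sum_k \alpha_k G(t-t_k)+\beta_k G^{(1)}(t-t_k)$ from \cite{superres} with the coefficient bounds $\normInfInf{\alpha}\le C$, $\normInfInf{\beta}\le C\lambdalo$, bound the second derivative of $q-v_j$ by $C\flo^2$ near $t_j$ via the kernel-derivative and summation estimates of \cite{superres}, and conclude by Taylor expansion using $q(t_j)=v_j$, $q'(t_j)=0$. The only (immaterial) differences are that the paper carries out the Taylor step separately on real and imaginary parts while you use the integral remainder on the complex-valued function directly, and that the second-derivative bound is only needed (and stated in \cite{superres}) on the near region $\SN^{\lambdalo}(j)$ rather than uniformly over all of $\mathbb{T}$, which is all your argument actually requires.
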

Consider the polar form 
\begin{align*}
\int_{\SN^{\lambdahi}\brac{j}}  h\brac{\text{d} \tau} = \abs{\int_{\SN^{\lambdahi}\brac{j}}  h\brac{\text{d} \tau}} e^{i\theta_j},
\end{align*}
where $\theta_j \in [0,2\pi)$. We set $v_j=e^{i \theta_j}$ in Lemma~\ref{lemma:dualpol} and apply the triangular inequality to obtain
\begin{align}
 \abs{\int_{\SN^{\lambdahi}\brac{j} }  h\brac{\text{d} \tau}} & =   \int_{\SN^{\lambdahi}\brac{j} } e^{-i\theta_j} h\brac{\text{d} \tau} \notag\\
& \leq \int_{\SN^{\lambdahi}\brac{j} } q\brac{\tau} h\brac{\text{d} \tau}+\abs{ \int_{\SN^{\lambdahi}\brac{j} } \brac{ q\brac{\tau}-e^{-i\theta_j} }h\brac{\text{d} \tau}},\label{dh_bound}
\end{align}
for all $t_j \in T$. By another application of the triangle inequality and \eqref{eq:int_q}
\begin{align}
\int_{\SN^{\lambdahi}} q\brac{\tau} h\brac{\text{d} \tau} & \leq \abs{\int_{\mathbb{T}} q\brac{\tau} h\brac{\text{d} \tau}} + \abs{\int_{\SF^{\lambdahi}} q\brac{\tau} h\brac{\text{d} \tau}} \leq 2 \delta + \normTV{P_{\SF^{\lambdahi}} \brac{h}}. \label{ineq_delta_A}
\end{align}
To bound the remaining term in \eqref{dh_bound}, we apply Lemma~\ref{lemma:dualpol_extra} 
with $t_j=0$ (this is no loss of generality),
\begin{align*}
  \abs{\int_{\SN^{\lambdahi}\brac{j}  } \brac{q(t)-e^{-i\theta_j} } h\brac{\text{d} t}} & \leq \int_{\SN^{\lambdahi}\brac{j}  } \abs{q(t)-e^{-i\theta_j} } \abs{h} \brac{\text{d}t}\\
  & \leq \int_{\SN^{\lambdahi}\brac{j} } \frac{C t^2}{\lambdalo^2}
  \abs{h} \brac{\text{d}t} = C I_{\SN^{\lambdahi}\brac{j}}\brac{h}.
\end{align*}
It follows from this, \eqref{dh_bound} and \eqref{ineq_delta_A} that 
\begin{align*}
\abs{ \int_{\SN^{\lambdahi}} h\brac{\text{d} \tau}} & \leq 2\delta + \normTV{P_{\SF^{\lambdahi}} \brac{h}} + C I_{\SN^{\lambdahi}}\brac{h}.
\end{align*}
The proof is complete.

\subsection{Proof of Lemma \ref{lemma:int_t_dh}}
\label{sec:int_t_dh}

We record a simple lemma. 
\begin{lemma}
\label{lemma:int_t_dh_f_c}
For any measure $\mu$ and $t_j=0$,
\begin{align*}
  \abs{\int_{0.16 \lambdahiEq}^{0.16 \lambdaloEq} \tau \mu\brac{\dEq
      \tau}} & \leq 6.25 \, \lambdahiEq \, \srfeq^2 \,
  I_{\SNEq^{\lambdaloEq} \brac{j}}\brac{\mu}.
\end{align*}
\end{lemma}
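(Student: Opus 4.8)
The plan is to reduce the first-moment integral to the second-moment functional $I_{\SN^{\lambdalo}\brac{j}}$ by exploiting that the domain of integration is bounded away from the origin from below. First I would pass to the total-variation measure and drop the sign of $\tau$, which is positive throughout $[0.16\lambdahi, 0.16\lambdalo]$, so that by the triangle inequality
\[
\abs{\int_{0.16 \lambdahi}^{0.16 \lambdalo} \tau \, \mu\brac{\text{d}\tau}} \leq \int_{0.16 \lambdahi}^{0.16 \lambdalo} \tau \, \abs{\mu}\brac{\text{d}\tau}.
\]
The one idea in the proof is the elementary linearization $\tau \le \tau^2 / (0.16\,\lambdahi)$, valid precisely because $\tau \ge 0.16\,\lambdahi$ on the interval; dividing and multiplying by the smallest admissible value of $\tau$ converts the linear weight into the quadratic weight that $I$ measures.

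Applying this bound and then enlarging the domain to the full near region gives the chain
\begin{align*}
\int_{0.16 \lambdahi}^{0.16 \lambdalo} \tau \, \abs{\mu}\brac{\text{d}\tau}
&\leq \frac{1}{0.16\,\lambdahi}\int_{0.16 \lambdahi}^{0.16 \lambdalo} \tau^2 \, \abs{\mu}\brac{\text{d}\tau} \\
&\leq \frac{1}{0.16\,\lambdahi}\int_{\SN^{\lambdalo}\brac{j}} \tau^2 \, \abs{\mu}\brac{\text{d}\tau},
\end{align*}
where the last step uses $[0.16\lambdahi, 0.16\lambdalo] \subset \SN^{\lambdalo}\brac{j} = \keys{\abs{\tau} \le 0.16\lambdalo}$ (recall $t_j = 0$) together with the nonnegativity of $\tau^2$. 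By definition the final integral equals $\lambdalo^2 \, I_{\SN^{\lambdalo}\brac{j}}\brac{\mu}$, so the prefactor becomes $\lambdalo^2/(0.16\,\lambdahi)$. Rewriting $\lambdalo^2 = \srf^2 \lambdahi^2$ via $\srf = \lambdalo/\lambdahi$ and using $1/0.16 = 6.25$ collapses this to $6.25 \, \lambdahi \, \srf^2 \, I_{\SN^{\lambdalo}\brac{j}}\brac{\mu}$, which is the claim.

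I do not expect any genuine analytic obstacle here: the statement is a bookkeeping estimate and the argument is just the triangle inequality followed by the linearization and an inclusion of sets. The only point worth flagging is that the specific lower limit $0.16\,\lambdahi$ is exactly what produces both the stated constant $6.25 = 1/0.16$ and the $\srf^2$ factor, so essentially no slack is lost; any cruder lower bound on $\tau$ on the interval would inflate either the constant or the power of $\srf$.
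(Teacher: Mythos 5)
Your proof is correct and is essentially identical to the paper's: both use the triangle inequality, the pointwise bound $\tau \le \tau^2/(0.16\,\lambdahi)$ valid on $[0.16\,\lambdahi, 0.16\,\lambdalo]$, and the inclusion of that interval in $\SN^{\lambdalo}\brac{j}$ to reach $\frac{\lambdalo^2}{0.16\,\lambdahi}\, I_{\SN^{\lambdalo}\brac{j}}\brac{\mu} = 6.25\,\lambdahi\,\srf^2\, I_{\SN^{\lambdalo}\brac{j}}\brac{\mu}$. No gaps.
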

\begin{proof}
  Note that in the interval $\sqbr{0.16 \lambdahi,0.16 \lambdalo}$,
  $t/0.16\,\lambdahi \ge 1$, whence
\[
\abs{\int_{0.16 \lambdahi}^{0.16 \lambdalo} \tau \mu \brac{\text{d}
    \tau}} \leq \int_{0.16 \lambdahi}^{0.16 \lambdalo} \tau \abs{\mu}
\brac{\text{d} \tau} \leq \int_{0.16 \lambdahi}^{0.16 \lambdalo}
\frac{\tau^2}{0.16 \lambdahi} \abs{\mu} \brac{\text{d} \tau} \leq
\frac{ \lambdalo^2 }{0.16 \, \lambdahi} I_{\SN^{\lambdalo}
  \brac{j}}\brac{\mu}.
\]
\end{proof}

We now turn our attention to the proof of Lemma \ref{lemma:int_t_dh}. 
By the triangle inequality,
\begin{multline}
  \sum_{t_j \in T} \abs{\int_{\SN^{\lambdahi}\brac{j} }
    \brac{\tau-t_j} h\brac{\text{d} \tau}} \leq \\
\sum_{t_j \in T}
  \abs{\int_{\SN^{\lambdalo}\brac{j} } \brac{\tau-t_j} h\brac{\text{d}
      \tau}} + \sum_{t_j \in T} \abs{\int_{0.16 \lambdahi \leq
      \abs{\tau-t_j} \leq 0.16 \lambdalo} \brac{\tau-t_j}
    h\brac{\text{d} \tau}} . \label{t_dh_bound_1}
\end{multline}
The second term is bounded via Lemma \ref{lemma:int_t_dh_f_c}. For the
first, we use an argument very similar to the proof of
Lemma~\ref{lemma:int_dh}. Here, we exploit the existence of a
low-frequency polynomial that is almost linear in the vicinity of the
elements of $T$. The result below is proved in
Section~\ref{sec:int_t_dh_polynomial} of the Appendix.
\begin{lemma}
\label{lemma:int_t_dh_polynomial}
Suppose $T$ obeys the separation condition \eqref{eq:min-distance} and
take any $v \in \C^{|T|}$ with $|v_j| = 1$. Then there exists a
low-frequency trigonometric polynomial
\begin{equation*}
\label{eq:trig_1}
q_1(t) = \sum_{k = -\floEq}^{\floEq} c_k e^{i2\pi k t} 
\end{equation*}
obeying 
\begin{align}
|q_1(t)-v_j\brac{t-t_j}| & \leq \frac{C_a \brac{t-t_j}^2}{\lambdaloEq } ,  \quad  t \in \SNEq^{\lambdaloEq}\brac{j}, \label{q1_1}\\
|q_1(t)| & \leq C_b \lambdaloEq ,  \quad  t \in \SFEq^{\lambdaloEq}, \label{q1_2}
\end{align}
for positive constants $C_a$, $C_b$. 
\end{lemma}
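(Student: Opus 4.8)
The plan is to build $q_1$ by the same interpolation scheme used to construct the dual certificate in the companion paper~\cite{superres}, but with the interpolation data arranged so that $q_1$ reproduces the \emph{slope} $v_j$ at each $t_j$ rather than the value. Let $\bar K$ be the even, low-pass interpolation kernel of~\cite{superres} (bandlimited to $[-\flo,\flo]$, normalized so that $\bar K(0)=1$ and localized on the scale $\lambdalo$), and look for
\[
q_1(t) = \sum_{t_k \in T} \alpha_k\, \bar K(t-t_k) + \beta_k\, \bar K'(t-t_k).
\]
Since $\bar K$ and $\bar K'$ are low frequency, so is $q_1$. I would then impose the $2\abs{T}$ conditions $q_1(t_j)=0$ and $q_1'(t_j)=v_j$ for every $t_j\in T$; the first matches $v_j(t_j-t_j)=0$ exactly, and the second pins the slope of $q_1$ at $t_j$ to the prescribed $v_j$.

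These conditions form a linear system
\[
\MAT{ D_0 & D_1 \\ D_1 & D_2 } \MAT{ \alpha \\ \beta } = \MAT{ 0 \\ v },
\]
with $(D_0)_{jk}=\bar K(t_j-t_k)$, $(D_1)_{jk}=\bar K'(t_j-t_k)$ and $(D_2)_{jk}=\bar K''(t_j-t_k)$. Because $\bar K$ is even, $\bar K'(0)=0$, while $\bar K(0)=1$ and $\bar K''(0)\asymp -\lambdalo^{-2}$, so the diagonal blocks satisfy $D_0\approx \Id$, $D_1\approx 0$ and $D_2\approx \bar K''(0)\,\Id$. The off-diagonal entries are values of $\bar K,\bar K',\bar K''$ at the pairwise gaps $t_j-t_k$, all of size at least $2\lambdalo$ by~\eqref{eq:min-distance}; the kernel-sum estimates of~\cite{superres} bound these contributions and show that the system matrix is invertible through a Neumann series. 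Solving, I expect $\beta_k = v_k/\bar K''(0) + (\text{correction}) \asymp \lambdalo^2$ and, from $D_0\alpha = -D_1\beta$, the correction $\alpha_k \lesssim \lambdalo$, both with errors controlled by the same estimates.

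With the coefficients in hand, the two bounds follow from the localization of $\bar K$. For the near bound, the interpolation conditions make $q_1(t)-v_j(t-t_j)$ vanish together with its first derivative at $t_j$ (note $v_j(\cdot-t_j)$ has vanishing second derivative), so Taylor's theorem gives $\abs{q_1(t)-v_j(t-t_j)}\le \frac12 (t-t_j)^2 \sup \abs{q_1''}$ over $\SN^{\lambdalo}\brac{j}$; it therefore suffices to show $\abs{q_1''}\lesssim \lambdalo^{-1}$ there, which follows from $q_1''=\sum_k \alpha_k\bar K''+\beta_k\bar K'''$ together with $\abs{\alpha_k}\lesssim\lambdalo$, $\abs{\beta_k}\asymp\lambdalo^2$, $\bar K''\asymp\lambdalo^{-2}$, $\bar K'''\asymp\lambdalo^{-3}$ near the peak and the decay/separation bounds for the cross terms, yielding~\eqref{q1_1}. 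For the far bound, on $\SF^{\lambdalo}$ every $\abs{t-t_k}>0.16\lambdalo$, so $\abs{q_1(t)}\le \sum_k \abs{\alpha_k}\abs{\bar K(t-t_k)}+\abs{\beta_k}\abs{\bar K'(t-t_k)}$; using $\abs{\alpha_k}\lesssim\lambdalo$, $\abs{\beta_k}\asymp\lambdalo^2$ and the convergent tail sums $\sum_k\abs{\bar K(t-t_k)}\lesssim 1$, $\sum_k\abs{\bar K'(t-t_k)}\lesssim\lambdalo^{-1}$ over the $\Omega(\lambdalo)$-separated support gives $\abs{q_1(t)}\le C_b\lambdalo$, which is~\eqref{q1_2}.

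The main obstacle is the second paragraph: proving invertibility of the interpolation matrix and getting the coefficient bounds and tail-sum estimates sharp enough, which is precisely where the separation hypothesis~\eqref{eq:min-distance} is essential. This is, however, exactly the technical apparatus already assembled for the value-interpolating certificate of Lemma~\ref{lemma:dualpol} in~\cite{superres}; the slope-interpolating variant here reuses it almost verbatim, the only systematic difference being the extra power of $\lambdalo$ attached to each derivative, which is what produces the $\lambdalo$-scaling in~\eqref{q1_1} and~\eqref{q1_2}.
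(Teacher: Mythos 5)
Your proposal is correct and follows essentially the same route as the paper: the same ansatz $q_1 = \sum_k \alpha_k \Ker(\cdot-t_k) + \beta_k \Ker^{(1)}(\cdot-t_k)$ with constraints $q_1(t_j)=0$, $q_1'(t_j)=v_j$, the same $2\abs{T}\times 2\abs{T}$ system inverted via Neumann series and Schur complement to get $\normInfInf{\alpha}\lesssim \lambdaloEq$, $\normInfInf{\beta}\lesssim \lambdaloEq^2$, the same Taylor argument (bounding the second derivative of $q_1(t)-v_j(t-t_j)$ by $C/\lambdaloEq$) for \eqref{q1_1}, and the same coefficient-times-tail-sum estimate for \eqref{q1_2}. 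The kernel-sum and invertibility facts you defer to the companion paper are exactly the ones the authors invoke, so no gap remains.
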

Consider the polar decomposition of 
\begin{align*}
\int_{\SN^{\lambdalo}\brac{j} }  \brac{\tau-t_j} h\brac{\text{d} \tau} = \abs{\int_{\SN^{\lambdalo}\brac{j} } \brac{\tau-t_j} h\brac{\text{d} \tau}} e^{i\theta_j},
\end{align*}
where $\theta_j \in [0,2\pi)$, $t_j \in T$, and set
$v_j=e^{i\theta_j}$ in Lemma~\ref{lemma:int_t_dh_polynomial}. Again,
suppose $t_j=0$. Then
\begin{align}
 \abs{\int_{\SN^{\lambdalo}\brac{j} } \tau h\brac{\text{d} \tau}} & =\int_{\SN^{\lambdalo}\brac{j} } e^{-i\theta_j}  \tau h\brac{\text{d} \tau} \notag\\
&  \leq \abs{ \int_{\SN^{\lambdalo}\brac{j} } \brac{ q_1\brac{\tau}-e^{-i\theta_j} \tau  }h\brac{\text{d} \tau}}  +\int_{\SN^{\lambdalo}\brac{j} } q_1\brac{\tau} h\brac{\text{d} \tau}. \label{t_dh_bound_2}
\end{align}
The inequality~\eqref{q1_1} and H\"older's inequality allow to bound
the first term in the right-hand side of \eqref{t_dh_bound_2},
\begin{align}
 \abs{ \int_{\SN^{\lambdalo}\brac{j} } \brac{ q_1\brac{\tau}-e^{-i\theta_j} \tau  }h\brac{\text{d} \tau}}  & \leq  \int_{\SN^{\lambdalo}\brac{j} } \abs{ q_1\brac{\tau}-e^{-i\theta_j} \tau  }\abs{h}\brac{\text{d} \tau}\notag\\
& \leq \frac{C_a }{\lambdalo }\int_{\SN^{\lambdalo}\brac{j} }  \tau^2 \abs{h}\brac{\text{d} \tau} \notag \\
& \leq C_a \, \lambdalo \, I_{\SN^{\lambdalo} \brac{j}}\brac{h} . \label{t_dh_bound_3}
\end{align}
Another application of the triangular inequality yields
\begin{align}
\int_{\SN^{\lambdalo} } q_1\brac{\tau} h\brac{\text{d} \tau} & \leq \abs{\int_{\mathbb{T}} q_1\brac{\tau} h\brac{\text{d} \tau}} +\int_{\SF^{\lambdalo} } q_1\brac{\tau} h\brac{\text{d} \tau}. \label{t_dh_bound_4}
\end{align}
We employ H\"older's inequality, \eqref{eq:int_q}, \eqref{q1_1} and
\eqref{q1_2} to bound each of the terms in the right-hand side. First,
\begin{equation}
  \abs{\int_{\mathbb{T}} q_1\brac{\tau} h\brac{\text{d} \tau}} \leq \normInf{q_1} \normLOne{\Qlo h} \le C \, \lambdalo \, \delta. \label{t_dh_bound_5} 
\end{equation}
Second,
\begin{equation}
  \int_{\SF^{\lambdalo} } q_1\brac{\tau} h\brac{\text{d} \tau} \leq \normInf{P_{\SF^{\lambdalo}}\brac{q_1}} \normTV{P_{\SF^{\lambdalo}} \brac{h}} \leq C_b \, \lambdalo\, \normTV{P_{\SF^{\lambdalo}} \brac{h}}. \label{t_dh_bound_6}
\end{equation}
Combining \eqref{t_dh_bound_1} with these estimates gives 
\begin{align*}
 \sum_{t_j \in T} \abs{\int_{\SN^{\lambdahi}\brac{j} } \brac{\tau-t_j} h\brac{\text{d} \tau}} & \leq C \brac{ \lambdalo \, \delta + \lambdalo \, \normTV{P_{\SF^{\lambdalo}} \brac{h}} + \lambdalo \, I_{\SN^{\lambdalo} }\brac{h} +  \lambdahi \, \srf^2 \, I_{\SN^{\lambdalo} }\brac{h} },
\end{align*}
as desired. 

\section{Numerical implementation}
\label{sec:numerical}
In this section we discuss briefly how to solve problem~\eqref{TVproblem_relaxed_L2} by semidefinite programming. The dual problem of~\eqref{TVproblem_relaxed_L2} takes the form 
\begin{align*}
\max_{u \in \C^{n}} \; \operatorname{Re}\sqbr{\brac{\Flo \, y}^{\ast}u}-\delta \normTwo{u} \quad \text{subject to} \quad
& \normInf{\Flo^{\ast} \, u} \leq 1,
\end{align*}
where $\Flo$ denotes the linear operator that maps a
function to its first $n := 2\flo + 1$ Fourier coefficients as in
\eqref{eq:stochasticnoise} so that $\Qlo = \Flo^*
\Flo$. 
The dual can be recast as the semidefinite program (SDP)
\begin{align}
\label{TVdual}
\max_{u } \; \operatorname{Re}\sqbr{\brac{\Flo \, y}^{\ast}u}-\delta \normTwo{u} \quad \text{subject to} \quad &
\MAT{X & u \\ u^{\ast} & 1} \succeq 0, \notag\\
& \sum_{i=1}^{n-j}X_{i,i+j} = \begin{cases} 1, & j = 0,\\
     0, & j = 1, 2, \ldots, n-1, 
   \end{cases} 
\end{align}
where $X$ is an $n\times n$ Hermitian matrix, leveraging a corollary
to Theorem 4.24 in \cite{dumitrescu} (see also
\cite{atomic_norm_denoising,superres,cs_offgrid}).  In most cases,
this allows to solve the primal problem with high accuracy. The following lemma suggests how to obtain a primal solution from a dual solution.
\begin{lemma}
\label{lemma:primaldual}
Let $(\xestEq, \uestEq)$ be a primal-dual pair of solutions to
\eqref{TVproblem_relaxed_L2}--\eqref{TVdual}. For any $t \in \mathbb{T}$
with $\xestEq\brac{t}\neq
0$, 
\begin{align*}
 \brac{\FloEq^{\ast} \, \uestEq} \brac{t} = \signEq{\xestEq\brac{t}}.
\end{align*}
\end{lemma}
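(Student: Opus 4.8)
The plan is to read the identity off from strong duality together with complementary slackness, exactly as in the classical analysis of $\ell_1$-type programs. Introduce the dual trigonometric polynomial $q := \Flo^{\ast}\uest$, so that dual feasibility in \eqref{TVdual} reads $\normInf{q}\le 1$, and the residual coefficient vector $r := \Flo\xest - \Flo y$, which obeys $\normTwo{r} = \normLTwo{\Qlo\xest - y}\le\delta$ by primal feasibility and Parseval's identity. Here I use that $\Qlo = \Flo^{\ast}\Flo$ and $\Flo\Flo^{\ast} = \Id$, so that $\Flo\Qlo = \Flo$ and hence $\Flo\brac{\Qlo\xest - y} = r$. The first thing I would check is that there is no duality gap between \eqref{TVproblem_relaxed_L2} and \eqref{TVdual}: since $\delta>0$ and $y$ is bandlimited, a measure $\tilde x$ with $\Flo\tilde x = \Flo y$ satisfies $\Qlo\tilde x = y$ and thus has zero residual, strictly below $\delta$; this is Slater's condition, and strong duality for the convex program follows.

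Given strong duality, a primal–dual optimal pair satisfies
\[
\normTV{\xest} = \operatorname{Re}\sqbr{\brac{\Flo y}^{\ast}\uest} - \delta\,\normTwo{\uest}.
\]
The heart of the argument is to bound the right-hand side above by $\normTV{\xest}$ and then force every intermediate inequality to be an equality. Writing $\brac{\Flo y}^{\ast}\uest = \brac{\Flo\xest}^{\ast}\uest - r^{\ast}\uest$ and using that $\Flo^{\ast}$ is the adjoint of $\Flo$ (so that, up to a conjugation that is immaterial under $\operatorname{Re}$, we have $\operatorname{Re}\sqbr{\brac{\Flo\xest}^{\ast}\uest} = \operatorname{Re}\int_{\mathbb{T}}\overline{q\brac{t}}\,\xest\brac{\text{d}t}$), I would bound the first piece by Hölder's inequality,
\[
\operatorname{Re}\int_{\mathbb{T}}\overline{q\brac{t}}\,\xest\brac{\text{d}t} \le \int_{\mathbb{T}}\abs{q\brac{t}}\,\abs{\xest}\brac{\text{d}t} \le \normInf{q}\,\normTV{\xest}\le\normTV{\xest},
\]
and the second by Cauchy–Schwarz, $\abs{r^{\ast}\uest}\le\normTwo{r}\,\normTwo{\uest}\le\delta\,\normTwo{\uest}$. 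Combining these gives $\normTV{\xest}\le\normTV{\xest} + \delta\normTwo{\uest} - \delta\normTwo{\uest} = \normTV{\xest}$, so both displayed inequalities must hold with equality.

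The concluding step extracts the pointwise identity from equality in the Hölder bound. Taking the polar decomposition $\xest = e^{i\psi\brac{t}}\abs{\xest}$, equality in $\operatorname{Re}\int_{\mathbb{T}}\overline{q}\,\xest\brac{\text{d}t} = \normTV{\xest}$ means $\int_{\mathbb{T}}\brac{1 - \operatorname{Re}\sqbr{\overline{q\brac{t}}e^{i\psi\brac{t}}}}\abs{\xest}\brac{\text{d}t} = 0$; since the integrand is nonnegative (because $\normInf{q}\le1$), it vanishes $\abs{\xest}$-almost everywhere, which forces $q\brac{t} = e^{i\psi\brac{t}} = \sign{\xest\brac{t}}$ on the support of $\xest$. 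Because $q = \Flo^{\ast}\uest$ is continuous and every $t$ with $\xest\brac{t}\neq 0$ carries positive $\abs{\xest}$-mass, this identity holds at each such $t$, which is the claim. I expect the main obstacle to be the careful verification of strong duality (Slater's condition and attainment of the dual optimum) together with the bookkeeping of the conjugation conventions linking $\brac{\Flo\xest}^{\ast}\uest$ to the measure–function pairing $\int_{\mathbb{T}}\overline{q}\,\xest\brac{\text{d}t}$; the remainder is the standard complementary-slackness manipulation.
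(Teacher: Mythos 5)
Your proof is correct and follows essentially the same route as the paper's: strong duality gives $\normTV{\xest}=\operatorname{Re}\sqbr{\brac{\Flo y}^{\ast}\uest}-\delta\normTwo{\uest}$, the residual term is absorbed by Cauchy--Schwarz, H\"older's inequality with the dual constraint $\normInf{\Flo^{\ast}\uest}\le 1$ caps the pairing by $\normTV{\xest}$, and equality throughout forces $\Flo^{\ast}\uest$ to equal the sign of $\xest$ on its support. The only difference is that you explicitly verify strong duality via Slater's condition and spell out the polar-decomposition argument in the final equality step, both of which the paper simply asserts.
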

\begin{proof}
  First, we can assume that $y$ is low pass in the sense that $\QloEq
  y = y$. Since $\xest$ is feasible, $\|\Flo (y-\xest)\|_{\ell_2} =
  \|y- \Qlo \xest\|_{L_2} \leq \delta$.  Second, strong duality holds
  here. Hence, the Cauchy-Schwarz inequality gives
\[
\normTV{\xest}  =\operatorname{Re}\sqbr{\brac{\Flo \,
    y}^{\ast}\uest}-\delta \normTwo{\uest} =\PROD{\Flo\xest}{\uest}+
\PROD{\Flo y-\Flo\xest}{\uest}-\delta \normTwo{\uest} \leq
\PROD{\xest}{\Flo^{\ast} \uest }.
\]
By H\"older's inequality and the constraint on $\Flo^{\ast}\uest$,
$\normTV{\xest} \geq \PROD{\xest}{\Flo^{\ast} \uest} $ so that
equality holds. This is only possible if $\Flo^{\ast} \uest$ equals
the sign of $\xest$ at every point where $\xest$ is nonzero.
\end{proof}
This result implies that it is
usually possible to determine the support of the primal solution by
locating those points where the polynomial $q(t) = (\Flo^{\ast}
\uest)(t)$ has modulus equal to one. Once the support is estimated
accurately, a solution to the primal problem can be found by solving a
discrete problem. Figure~\ref{fig:sdp_example} shows the result of
applying this scheme to a simple example. We omit further details and
defer the analysis of this approach to future work.
\begin{figure}
\centering
\subfloat[][]{\includegraphics[width=10cm]{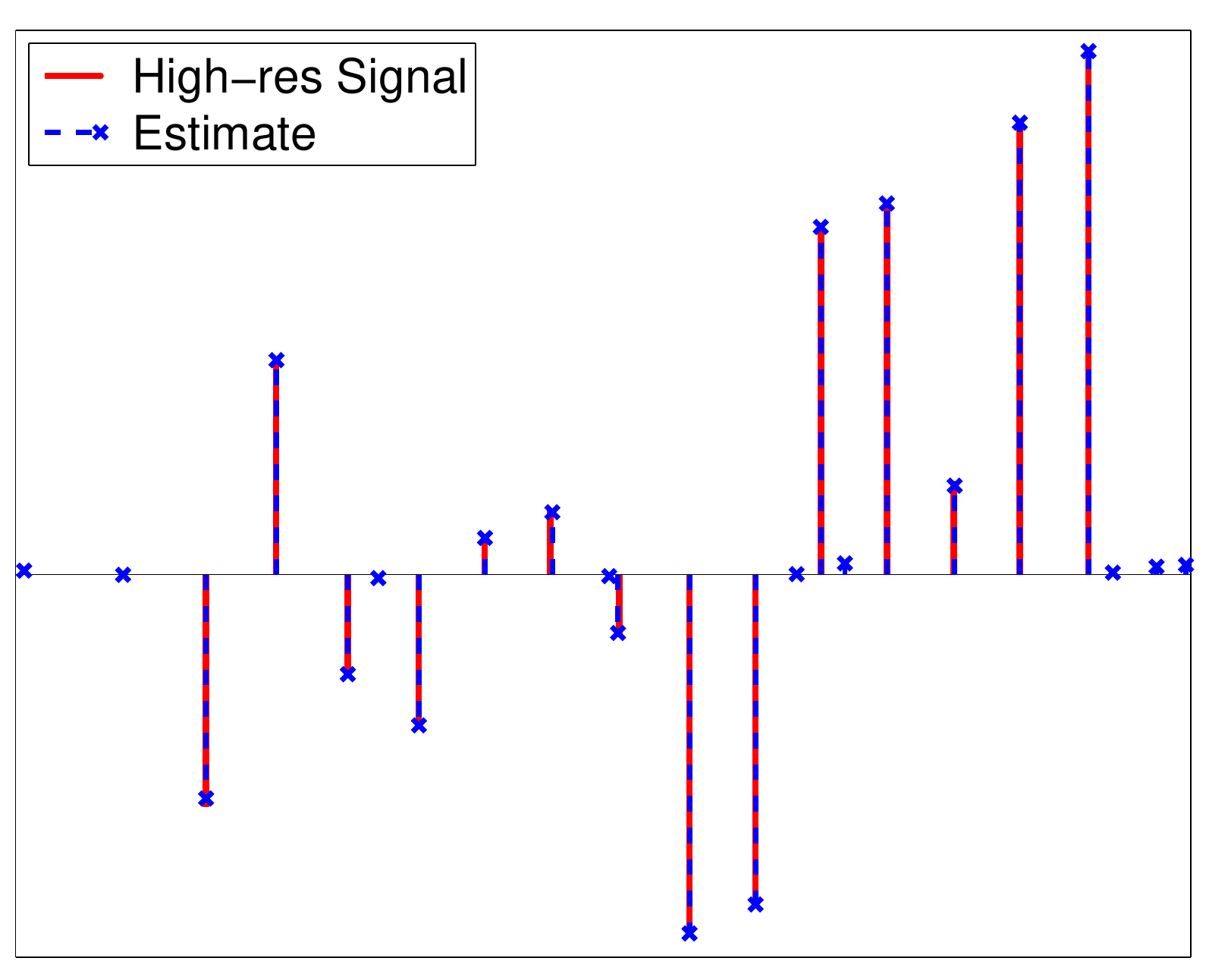}\label{subfig:lowres}}\\
\subfloat[][]{\includegraphics[width=10cm]{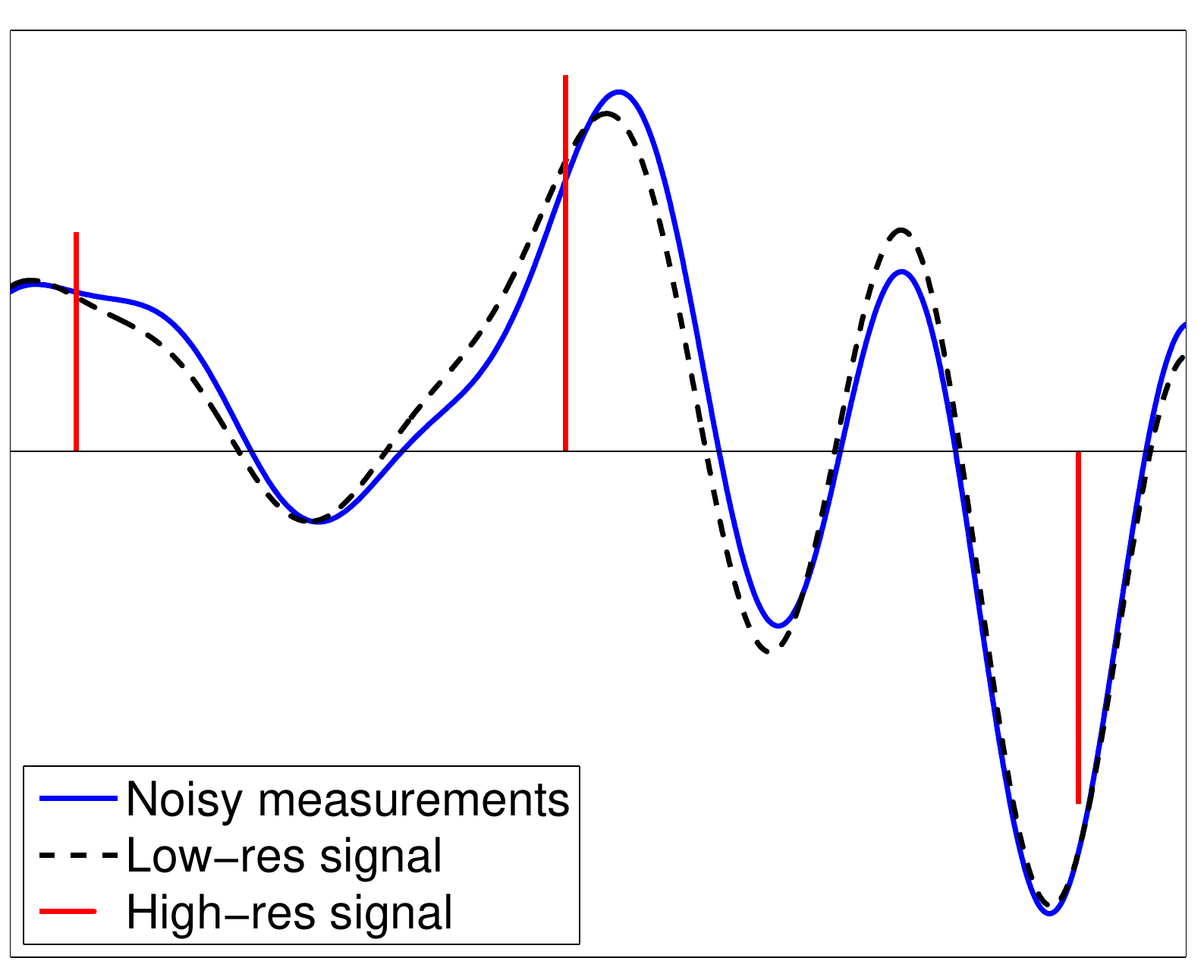}\label{subfig:recovery}}
\caption{  \protect\subref{subfig:lowres} Original signal (solid red spikes) and the estimate (dashed blue spikes with crosses) obtained by solving the sdp formulation of problem~\eqref{TVdual} using CVX~\cite{cvx}. The measurements consist of a noisy low-pass projection corresponding to the first 81 Fourier coefficients of the signal contaminated with i.i.d. Gaussian noise. The signal-to-noise ratio is 24.98 dB. Beyond the presence of some spurious spikes of small amplitude, the estimate is extremely precise.
 \protect\subref{subfig:recovery} Zoomed-in image of the noisy low-pass projection (solid blue line), the noiseless low-pass projection (dashed black line) and the original signal (red spikes).
}
\label{fig:sdp_example}
\end{figure}
\section{Discussion}
\label{sec:discussion}
In this work we introduce a theoretical framework that provides non-asymptotic stability guarantees for tractable super-resolution of multiple point sources in a continuous domain. More precisely, we show that it is possible to extrapolate the spectrum of a
superposition of point sources by convex programming and that the
extrapolation error scales quadratically with the super-resolution
factor. This is a worst case analysis since the noise has bounded norm
but is otherwise arbitrary. Natural extensions would include stability
studies using other error metrics and noise models. For instance, an
analysis tailored to a stochastic model might be able to sharpen
Corollary~\ref{cor:stochastic} and be more precise in its findings.
In a different direction, our techniques may be directly applicable to
related problems. An example concerns the use of the total-variation
norm for denoising line spectra~\cite{atomic_norm_denoising}. Here, it
would be interesting to see whether our methods allow to prove better
denoising performance under a minimum-separation condition. Another
example concerns the recovery of sparse signals from a random subset
of their low-pass Fourier coefficients~\cite{cs_offgrid}. Here, it is
likely that our work would yield stability guarantees from noisy
low-frequency data.

\subsection*{Acknowledgements}

E. C. is partially supported by AFOSR under grant FA9550-09-1-0643, by
ONR under grant N00014-09-1-0258 and by a gift from the Broadcom
Foundation. C.~F.~is supported by a Fundaci\'on Caja Madrid
Fellowship. We thank Carlos Sing-Long for useful feedback about an earlier version of the manuscript.

\bibliographystyle{abbrv}
\bibliography{refs_noise}

\begin{thebibliography}{10}

\bibitem{batenkov_prony}
D.~Batenkov and Y.~Yomdin.
\newblock On the accuracy of solving confluent {Prony} systems.
\newblock {\em SIAM Journal on Applied Mathematics}, 73(1):134--154, 2013.

\bibitem{palm}
E.~Betzig, G.~H. Patterson, R.~Sougrat, O.~W. Lindwasser, S.~Olenych, J.~S.
  Bonifacino, M.~W. Davidson, J.~Lippincott-Schwartz, and H.~F. Hess.
\newblock Imaging intracellular fluorescent proteins at nanometer resolution.
\newblock {\em Science}, 313(5793):1642--1645, 2006.

\bibitem{atomic_norm_denoising}
B.~N. Bhaskar, G.~Tang, and B.~Recht.
\newblock Atomic norm denoising with applications to line spectral estimation.
\newblock Preprint.

\bibitem{music1}
G.~Bienvenu.
\newblock Influence of the spatial coherence of the background noise on high
  resolution passive methods.
\newblock In {\em Proceedings of the International Conference on Acoustics,
  Speech and Signal Processing}, volume~4, pages 306 -- 309, 1979.

\bibitem{blu_fri}
T.~Blu, P.~Dragotti, M.~Vetterli, P.~Marziliano, and L.~Coulot.
\newblock Sparse sampling of signal innovations.
\newblock {\em IEEE Signal Processing Magazine}, 25(2):31--40, 2008.

\bibitem{superres}
E.~J. {Cand\`{e}s} and C.~Fernandez-Granda.
\newblock Towards a mathematical theory of super-resolution.
\newblock {\em Communications on Pure and Applied Mathematics}.
\newblock To appear.

\bibitem{mismatch}
Y.~Chi, L.~L. Scharf, A.~Pezeshki, and A.~R. Calderbank.
\newblock Sensitivity to basis mismatch in compressed sensing.
\newblock {\em IEEE Transactions on Signal Processing}, 59(5):2182--2195, 2011.

\bibitem{clergeot_music}
H.~Clergeot, S.~Tressens, and A.~Ouamri.
\newblock Performance of high resolution frequencies estimation methods
  compared to the {C}ram{\'e}r-{R}ao bounds.
\newblock {\em IEEE Transactions on Acoustics, Speech and Signal Processing},
  37(11):1703--1720, 1989.

\bibitem{donohoSuperres}
D.~L. Donoho.
\newblock {S}uperresolution via sparsity constraints.
\newblock {\em SIAM Journal on Mathematical Analysis}, 23(5):1309--1331, 1992.

\bibitem{spectral_cs}
M.~F. Duarte and R.~G. Baraniuk.
\newblock Spectral compressive sensing.
\newblock {\em Applied and Computational Harmonic Analysis}, To appear.

\bibitem{dumitrescu}
B.~Dumitrescu.
\newblock {\em Positive Trigonometric Polynomials and Signal Processing
  Applications}.
\newblock Springer, 2007.

\bibitem{grids_fannjiang}
A.~Fannjiang and W.~Liao.
\newblock Coherence-pattern guided compressive sensing with unresolved grids.
\newblock {\em SIAM Journal on Imaging Sciences}, 5:179, 2012.

\bibitem{fannjiang_music}
A.~C. Fannjiang.
\newblock The {MUSIC} algorithm for sparse objects: a compressed sensing
  analysis.
\newblock {\em Inverse Problems}, 27(3):035013, 2011.

\bibitem{cvx}
M.~Grant and S.~Boyd.
\newblock {CVX}: Matlab software for disciplined convex programming, version
  1.21.
\newblock \url{http://cvxr.com/cvx}, 2011.

\bibitem{medical_greenspan}
H.~Greenspan.
\newblock Super-resolution in medical imaging.
\newblock {\em The Computer Journal}, 52:43--63, 2009.

\bibitem{spectroscopy_superresolution}
T.~D. Harris, R.~D. Grober, J.~K. Trautman, and E.~Betzig.
\newblock Super-resolution imaging spectroscopy.
\newblock {\em Applied Spectroscopy}, 48(1):14A--21A, 1994.

\bibitem{helstrom}
C.~Helstrom.
\newblock The detection and resolution of optical signals.
\newblock {\em IEEE Transactions on Information Theory}, 10(4):275 -- 287,
  1964.

\bibitem{fpalm}
S.~T. Hess, T.~P. Girirajan, and M.~D. Mason.
\newblock Ultra-high resolution imaging by fluorescence photoactivation
  localization microscopy.
\newblock {\em Biophysical journal}, 91(11):4258, 2006.

\bibitem{refinement}
L.~Hu, Z.~Shi, J.~Zhou, and Q.~Fu.
\newblock Compressed sensing of complex sinusoids: An approach based on
  dictionary refinement.
\newblock {\em IEEE Transactions on Signal Processing}, 60(7):3809 --3822,
  2012.

\bibitem{pointsource_microscopy}
B.~Huang, M.~Bates, and X.~Zhuang.
\newblock Super-resolution fluorescence microscopy.
\newblock {\em Annual Review of Biochemistry}, (78):993--1016, 2009.

\bibitem{pointsource_speech}
F.~Itakura.
\newblock Line spectrum representation of linear predictor coefficients of
  speech signals.
\newblock {\em The Journal of the Acoustical Society of America},
  57(S1):S35--S35, 1975.

\bibitem{pet_kennedy}
J.~Kennedy, O.~Israel, A.~Frenkel, R.~Bar-Shalom, and H.~Azhari.
\newblock Super-resolution in {PET} imaging.
\newblock {\em IEEE Transactions on Medical Imaging}, 25(2):137 --147, 2006.

\bibitem{seismic_khaidukov}
V.~Khaidukov, E.~Landa, and T.~J. Moser.
\newblock Diffraction imaging by focusing-defocusing: An outlook on seismic
  superresolution.
\newblock {\em Geophysics}, 69(6):1478--1490, 2004.

\bibitem{chisquare}
B.~Laurent and P.~Massart.
\newblock {Adaptive estimation of a quadratic functional by model selection}.
\newblock {\em Annals of Statistics}, 28(5):1302--1053, 1992.

\bibitem{superres_survey}
J.~Lindberg.
\newblock Mathematical concepts of optical superresolution.
\newblock {\em Journal of Optics}, 14(8):083001, 2012.

\bibitem{pointsource_astronomy}
D.~Makovoz and F.~R. Marleau.
\newblock Point source extraction with {MOPEX}.
\newblock {\em Publications of the Astronomical Society of the Pacific},
  117(836):1113--1128, 2005.

\bibitem{microscopy_mccutchen}
C.~W. McCutchen.
\newblock Superresolution in microscopy and the {A}bbe resolution limit.
\newblock {\em Journal of the Optical Society of America}, 57(10):1190--1190,
  1967.

\bibitem{book_milanfar}
P.~Milanfar, editor.
\newblock {\em Super-Resolution Imaging}.
\newblock Series: Digital Imaging and Computer Vision, 2010.

\bibitem{radar_odendaal}
J.~Odendaal, E.~Barnard, and C.~Pistorius.
\newblock Two-dimensional superresolution radar imaging using the {MUSIC}
  algorithm.
\newblock {\em IEEE Transactions on Antennas and Propagation}, 42(10):1386
  --1391, 1994.

\bibitem{imaging_park}
S.~C. Park, M.~K. Park, and M.~G. Kang.
\newblock Super-resolution image reconstruction: a technical overview.
\newblock {\em IEEE Signal Processing Magazine}, 20(3):21 -- 36, 2003.

\bibitem{prony}
R.~Prony.
\newblock Essai exp\'erimental et analytique: sur les lois de la dilatabilit\'e
  de fluides \'elastique et sur celles de la force expansive de la vapeur de
  l'alkool, \`a diff\'erentes temp\'eratures.
\newblock {\em Journal de l'Ecole Polytechnique}, 1(2):24--76, 1795.

\bibitem{astronomy_puschmann}
K.~G. {Puschmann} and F.~{Kneer}.
\newblock {On super-resolution in astronomical imaging}.
\newblock {\em Astronomy and Astrophysics}, 436:373--378, 2005.

\bibitem{esprit}
R.~Roy and T.~Kailath.
\newblock {ESPRIT}- estimation of signal parameters via rotational invariance
  techniques.
\newblock {\em IEEE Transactions on Acoustics, Speech and Signal Processing},
  37(7):984 --995, 1989.

\bibitem{rudin}
W.~Rudin.
\newblock {\em Real and complex analysis}.
\newblock McGraw-Hill Book Co., New York, 3rd edition, 1987.

\bibitem{storm}
M.~J. Rust, M.~Bates, and X.~Zhuang.
\newblock Sub-diffraction-limit imaging by stochastic optical reconstruction
  microscopy ({STORM}).
\newblock {\em Nature methods}, 3(10):793--796, 2006.

\bibitem{music2}
R.~Schmidt.
\newblock Multiple emitter location and signal parameter estimation.
\newblock {\em IEEE Transactions on Antennas and Propagation}, 34(3):276 --
  280, 1986.

\bibitem{milanfar_imaging}
M.~Shahram and P.~Milanfar.
\newblock Imaging below the diffraction limit: a statistical analysis.
\newblock {\em IEEE Transactions on Image Processing}, 13(5):677 --689, 2004.

\bibitem{milanfar_sinusoids}
M.~Shahram and P.~Milanfar.
\newblock On the resolvability of sinusoids with nearby frequencies in the
  presence of noise.
\newblock {\em IEEE Transactions on Signal Processing}, 53(7):2579 -- 2588,
  2005.

\bibitem{slepian_discrete}
D.~{Slepian}.
\newblock {Prolate spheroidal wave functions, {F}ourier analysis, and
  uncertainty. V - The discrete case}.
\newblock {\em Bell System Technical Journal}, 57:1371--1430, 1978.

\bibitem{stoica_grid}
P.~Stoica and P.~Babu.
\newblock Sparse estimation of spectral lines: Grid selection problems and
  their solutions.
\newblock {\em IEEE Transactions on Signal Processing}, 60(2):962 --967, 2012.

\bibitem{stoicaNLSmultimodal}
P.~Stoica, R.~Moses, B.~Friedlander, and T.~Soderstrom.
\newblock Maximum likelihood estimation of the parameters of multiple sinusoids
  from noisy measurements.
\newblock {\em IEEE Transactions on Acoustics, Speech and Signal Processing},
  37(3):378--392, 1989.

\bibitem{stoica_book}
P.~Stoica and R.~L. Moses.
\newblock {\em Spectral Analysis of Signals}.
\newblock Prentice Hall, 2005.

\bibitem{stoicaNLS}
P.~Stoica and A.~Nehorai.
\newblock Statistical analysis of two nonlinear least-squares estimators of
  sine-wave parameters in the colored-noise case.
\newblock {\em Circuits, Systems, and Signal Processing}, 8(1):3--15, 1989.

\bibitem{stoica_statistical}
P.~Stoica and T.~Soderstrom.
\newblock Statistical analysis of {MUSIC} and subspace rotation estimates of
  sinusoidal frequencies.
\newblock {\em IEEE Transactions on Signal Processing}, 39(8):1836--1847, 1991.

\bibitem{cs_offgrid}
G.~Tang, B.~N. Bhaskar, P.~Shah, and B.~Recht.
\newblock Compressed sensing off the grid.
\newblock Preprint.

\end{thebibliography}

\clearpage
\appendix

\section{Proof of Lemma~\ref{lemma:dualpol_extra}}
\label{sec:dualpol_extra}

We use the construction described in
Section 2 of~\cite{superres}. In more detail,
\begin{equation*}
  q(t)  = \sum_{t_k \in T} \alpha_k \Ker(t-t_k) + \beta_k \Ker^{\brac{1}}(t-t_k), 
\end{equation*}
where $\alpha, \beta \in \C^{\abs{T}}$ are coefficient vectors,
\begin{equation} 
  \Ker(t) = \left[\frac{\sin \brac{\brac{\frac{\flo}{2}+1} \pi
        t}}{\brac{\frac{\flo}{2}+1}\sin \brac{\pi t}}\right]^4, \quad
  t \in \mathbb{T} \setminus\{0\}, 
\label{def:kernel}
\end{equation}
and $\Ker(0) = 1$; here, $\Ker^{\brac{\ell}}$ is the $\ell$th
derivative of $\Ker$. If $\flo$ is even, $\Ker(t)$ is the square of
the Fej\'er kernel. By construction, the coefficients $\alpha$ and
$\beta$ are selected such that for all $t_j \in T$,
\begin{align*}
  q(t_j) & = v_j\\
  q'(t_j) & = 0.
\end{align*}
Without loss of generality we consider $t_j=0$ and bound
$q\brac{t}-v_j$ in the interval $\sqbr{0,0.16\lambdalo}$. To ease
notation, we define $w(t)=q\brac{t}-v_j=w_R(t)+i\,w_I(t)$, where $w_R$
is the real part of $w$ and $w_I$ the imaginary part. Leveraging
different results from Section 2 in~\cite{superres} (in particular
the equations in (2.25) and Lemmas 2.2 and 2.7), we have
\begin{align*}
  \abs{w_R''\brac{t}} & =  \abs{\sum_{t_k \in T}\Real{\alpha_k} \Ker^{\brac{2}}\brac{t-t_k} + \sum_{t_k \in T}\Real{\beta_k} \Ker^{\brac{3}}\brac{t-t_k}}\notag\\
  & \leq \normInf{\alpha}\sum_{t_k \in T}\abs{\Ker^{\brac{2}}\brac{t-t_k}} + \normInf{\beta}\sum_{t_k \in T} \abs{\Ker^{\brac{3}}\brac{t-t_k}}\notag\\
& \leq C_{\alpha } \brac{\abs{\Ker^{\brac{2}}\brac{t}} +  \sum_{t_k \in T\setminus \{0\}} \abs{\Ker^{\brac{2}}\brac{t-t_k}} } + C_{\beta}\lambdalo \brac{\abs{\Ker^{\brac{3}}\brac{t}}+  \sum_{t_k \in T\setminus \{0\}} \abs{\Ker^{\brac{3}}\brac{t-t_k}}}\\
& \leq C \, \flo^2.
\end{align*}
The same bound holds for $w_I$. Since $w_R(0)$, $w_R'(0)$, $w_I(0)$
and $w_I'(0)$ are all equal to zero, this implies $\abs{w_R(t)}\leq C'
\flo^2 t^2$ and $\abs{w_I(t)}\leq C' \flo^2 t^2$ in the interval of
interest, which allows the conclusion
\begin{align*}
\abs{w(t)} \leq C \, \flo^2 t^2.
\end{align*}

\section{Proof of Lemma \ref{lemma:int_t_dh_polynomial}}
\label{sec:int_t_dh_polynomial}
The proof is similar to that of Lemma \ref{lemma:dualpol} (see Section
2 of \cite{superres}), where a low-frequency kernel and its derivative
are used to interpolate an arbitrary sign pattern on a support
satisfying the minimum-distance condition. More precisely, we set
\begin{equation}
  q_1(t)  = \sum_{t_k \in T} \alpha_k \Ker(t-t_k) + \beta_k \Ker^{\brac{1}}(t-t_k), 
  \label{def:q_1}
\end{equation}
where $\alpha, \beta \in \C^{\abs{T}}$ are coefficient vectors, $\Ker$ is
defined by~\eqref{def:kernel}. Note that $\Ker$, $\Ker^{\brac{1}}$ and,
consequently, $q_1$ are trigonometric polynomials of degree at most
$f_0$.  By Lemma 2.7 in \cite{superres}, it holds that for any $t_0
\in T$ and $t \in \mathbb{T}$ obeying $\abs{t-t_0}\leq 0.16
\lambdalo$,
\begin{align}
  \sum_{t_k \in T\setminus \{t_0\}} \abs{\Ker^{\brac{\ell}}\brac{t-t_k}}
  \leq C_{\ell} \flo^{\ell},
\label{boundF}
\end{align}
where $C_{\ell}$ is a positive constant for $\ell=0,1,2,3$; in
particular, $C_0 \leq 0.007$, $C_1 \leq 0.08$ and $C_2 \leq 1.06$. In
addition, there exist other positive constants $C_0'$ and $C_1'$, such
that for all $t_0 \in T$ and $t\in \mathbb{T}$ with $\abs{t-t_0}\leq
\Delta/2$,
\begin{align}
  \sum_{t_k \in T\setminus \{t_0\}} \abs{\Ker^{\brac{\ell}}\brac{t-t_k}}
  \leq C_{\ell}' \flo^{\ell}
\label{boundF2}
\end{align}
for $\ell=0,1$. We refer to Section 2.3 in \cite{superres} for a
detailed description of how to compute these bounds.

In order to satisfy \eqref{q1_1} and \eqref{q1_2}, we constrain $q_1$
as follows: for each $t_j \in T$, 
\begin{align*}
q_1(t_j) & = 0,\\
q_1'(t_j) & = v_j.
\end{align*}
Intuitively, this forces $q_1$ to approximate the linear function
$v_j\brac{t-t_j}$ around $t_j$. These constraints can be expressed in
matrix form,
\[
\begin{bmatrix} D_0 & D_1\\ D_1 & D_2 \end{bmatrix} \begin{bmatrix} \alpha \\
  \beta \end{bmatrix} =\begin{bmatrix} 0\\  v \end{bmatrix},
\]
where
\[
\brac{D_0}_{jk} = \Ker\brac{t_j - t_k}, \quad \brac{D_1}_{jk} =
\Ker^{\brac{1}}\brac{t_j - t_k}, \quad \brac{D_2}_{jk} = \Ker^{\brac{2}}\brac{t_j - t_k},
\]
and $j$ and $k$ range from $1$ to $\abs{T}$. It is shown in Section 2.3.1 of \cite{superres} that under the minimum-separation condition this system is invertible, so that $\alpha$ and $\beta$ are well defined. These coefficient vectors can consequently be expressed as
\[
  \begin{bmatrix} \alpha \\
    \beta \end{bmatrix} =\begin{bmatrix} -D_0^{-1}D_1 \\
    \Id \end{bmatrix} S^{-1} v, \quad S :=
  D_2-D_1D_0^{-1}D_1,
\]
where $S$ is the Schur complement. Inequality~\eqref{boundF} implies
\begin{align}
\normInfInf{\Id- D_0} & \le
  C_0  \label{Id_D_bound},\\
\normInfInf{D_1}   & \le C_{1} \flo, \label{D1_bound}\\
\normInfInf{\kappa\Id-D_2} & \le 
  C_{2} \flo^2
\label{Id_D2_bound}\text{,}
\end{align} 
where $\kappa=\abs{\Ker^{\brac{2}}(0)} =\pi^2 \flo (\flo + 4)/3$.\\

Let $\|M \|_\infty$ denote the usual infinity norm of a matrix $M$ defined as
$\|M\|_\infty = \max_{\|x\|_\infty = 1} \|Mx\|_\infty = \max_i \sum_j
|a_{ij}|$. Then, if $\normInfInf{I - M} < 1$, the series $M^{-1} = \brac{I - \brac{I - M}}^{-1} = \sum_{k \ge 0}
\brac{I - M}^k$ is convergent and we have
\begin{equation*}
  \normInfInf{M^{-1}}  \leq \frac{1}{1-\normInfInf{\Id-M}}.
\end{equation*}
This, together with \eqref{Id_D_bound}, \eqref{D1_bound} and \eqref{Id_D2_bound} implies
\begin{align*}
\normInfInf{D_0^{-1}} & \leq \frac{1}{1-\normInfInf{\Id-D_0}} \leq \frac{1}{1-C_0} ,\\ 
\normInfInf{\kappa \Id - S} & \leq  \normInfInf{\kappa \Id-D_2}+\normInfInf{D_1}\normInfInf{D_0^{-1}}\normInfInf{D_1} \leq \brac{C_2+\frac{C_1^2}{1-C_0}} \flo^2 , \\
\normInfInf{S^{-1}} & = \kappa^{-1}\normInfInf{\brac{\frac{S}{\kappa}}^{-1}} \leq \frac{1}{\kappa-\normInfInf{\kappa\Id-S}} \leq \brac{\kappa-\brac{C_2+\frac{C_1^2}{1-C_0}} \flo^2}^{-1} \leq C_{\kappa} \lambdalo^{2},
\end{align*}
for a certain positive constant $ C_{\kappa}$. Note that due to the
numeric upper bounds on the constants in \eqref{boundF} $C_{\kappa}$
is indeed a positive constant as long as $\flo \geq 1$. Finally, we
obtain a bound on the magnitude of the entries of $\alpha$
\begin{align}
\normInfInf{\alpha} & = \normInfInf{D_0^{-1}D_1 S^{-1} v} \leq  \normInfInf{D_0^{-1}D_1 S^{-1}} \leq \normInfInf{D_0^{-1}} \, \normInfInf{D_1}  \, \normInfInf{S^{-1}} \leq C_{\alpha} \lambdalo,\label{boundAlpha}
\end{align}
where $C_{\alpha} =C_{\kappa} C_1 /\brac{1-C_0}$, and on the entries of  $\beta$
\begin{equation}
\normInfInf{\beta}=\normInfInf{S^{-1} v}\leq \normInfInf{S^{-1}}  \leq C_{\beta}\lambdalo^{2},\label{boundBeta} 
\end{equation}
for a positive constant $C_{\beta}=C_{\kappa}$. Combining these inequalities with \eqref{boundF2} and the fact that the absolute values of $\Ker(t)$ and $\Ker^{\brac{1}}(t)$ are bounded by one and $7 \flo$ respectively (see the proof of Lemma C.5 in \cite{superres}), we have that for any $t$
\begin{align}
 |q_1(t)| & = \abs{\sum_{t_k \in T}\alpha_k \Ker\brac{t-t_k} + \sum_{t_k \in T} \beta_k \Ker^{\brac{1}}\brac{t-t_k}}\notag\\
  & \leq \normInfInf{\alpha}\sum_{t_k \in T}\abs{\Ker\brac{t-t_k}} + \normInfInf{\beta}\sum_{t_k \in T} \abs{\Ker^{\brac{1}}\brac{t-t_k}}\notag\\
& \leq C_{\alpha } \lambdalo \brac{\abs{\Ker\brac{t}} +  \sum_{t_k \in T\setminus \{t_i\}} \abs{\Ker\brac{t-t_k}} } + C_{\beta}\lambdalo^2 \brac{\abs{\Ker^{\brac{1}}\brac{t}}+  \sum_{t_k \in T\setminus \{t_i\}} \abs{\Ker^{\brac{1}}\brac{t-t_k}}} \notag\\
& \leq C \lambdalo,
\end{align}
where $t_i$ denotes the element in $T$ nearest to $t$ (note that all
other elements are at least $\Delta/2$ away). Thus, \eqref{q1_2}
holds.

The proof is completed by the following lemma, which proves 
\eqref{q1_1}.
\begin{lemma}
\label{lemma:concavity}
For any $t_j \in T$ and $t \in \mathbb{T}$ obeying $\abs{t-t_j} \leq
0.16 \lambdaloEq$, we have 
\begin{align*}
|q_1(t)-v_j\brac{t-t_j}| & \leq \frac{C \brac{t-t_j}^2}{\lambdaloEq}. 
\end{align*}
\end{lemma}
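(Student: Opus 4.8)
The plan is to follow the strategy of the proof of Lemma~\ref{lemma:dualpol_extra} almost verbatim, replacing the second-derivative estimate of order $\flo^2$ by one of order $\lambdalo^{-1}$. Fix $t_j \in T$ and, without loss of generality, set $t_j = 0$. Define $w(t) = q_1(t) - v_j t$ on the interval $\abs{t} \le 0.16\,\lambdalo$. By the interpolation constraints imposed on $q_1$ in the construction, namely $q_1(0) = 0$ and $q_1'(0) = v_j$, we have $w(0) = 0$ and $w'(0) = 0$; thus $w$ vanishes to first order at the origin, and it suffices to control its second derivative.

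Since the linear term $v_j t$ is annihilated by differentiating twice, $w''(t) = q_1''(t) = \sum_{t_k \in T} \alpha_k \Ker^{(2)}(t - t_k) + \beta_k \Ker^{(3)}(t - t_k)$ by the definition~\eqref{def:q_1}. I would bound this exactly as in Lemma~\ref{lemma:dualpol_extra}: isolate the term centered at the nearest node $t=0$ and use \eqref{boundF} to control the tail, so that $\sum_{t_k \in T}\abs{\Ker^{(2)}(t-t_k)} \le C\flo^2$ and $\sum_{t_k \in T}\abs{\Ker^{(3)}(t-t_k)} \le C\flo^3$ for $\abs{t}\le 0.16\,\lambdalo$ (the self-terms $\abs{\Ker^{(2)}(t)}\le\kappa$ and $\abs{\Ker^{(3)}(t)}$ are of the respective orders $\flo^2$ and $\flo^3$). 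Inserting the coefficient bounds $\normInfInf{\alpha}\le C_\alpha\lambdalo$ from \eqref{boundAlpha} and $\normInfInf{\beta}\le C_\beta\lambdalo^2$ from \eqref{boundBeta} then gives
\[
\abs{w''(t)} \le C_\alpha\,\lambdalo\cdot C\flo^2 + C_\beta\,\lambdalo^2\cdot C\flo^3 .
\]
The crucial observation is the power counting: since $\flo = 1/\lambdalo$, both $\lambdalo\,\flo^2$ and $\lambdalo^2\,\flo^3$ equal $\flo = \lambdalo^{-1}$, so that $\abs{w''(t)} \le C\,\lambdalo^{-1}$ uniformly on the interval of interest.

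To conclude, I would split $w = w_R + i\,w_I$ into its real and imaginary parts, each of which satisfies $w_R(0)=w_R'(0)=0$ (resp.\ for $w_I$) and the same second-derivative bound. Integrating twice from the origin then yields $\abs{w_R(t)} \le \tfrac12\, C\,\lambdalo^{-1} t^2$ and likewise for $w_I$, whence $\abs{q_1(t) - v_j t} = \abs{w(t)} \le C\,t^2/\lambdalo$, which is exactly \eqref{q1_1} after translating $0$ back to $t_j$. The only genuinely delicate point is the second-derivative estimate, and within it the power counting that makes the two contributions collapse to the single scale $\lambdalo^{-1}$; this is precisely the gain of one power of $\lambdalo$ over the zeroth-order estimate of Lemma~\ref{lemma:dualpol_extra}, reflecting the fact that $q_1$ is built to match a linear rather than a constant target near each node. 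Everything else is routine and structurally identical to the earlier lemma.
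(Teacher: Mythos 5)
Your proposal is correct and follows essentially the same route as the paper's proof: both reduce the claim to the second-derivative bound $\abs{w''(t)} \leq C\flo$ on the interval $\abs{t}\leq 0.16\,\lambdalo$ via the coefficient bounds \eqref{boundAlpha}--\eqref{boundBeta}, the tail estimate \eqref{boundF}, and the self-term bounds on $\Ker^{(2)}$, $\Ker^{(3)}$ (which the paper takes as $4\flo^2$ and $6\flo^3$ from Lemma 2.3 of \cite{superres}), and then integrate twice using $w(0)=w'(0)=0$ separately on the real and imaginary parts. The power counting $\lambdalo\flo^2 \sim \lambdalo^2\flo^3 \sim \flo$ that you highlight is indeed the heart of the argument and matches the paper exactly.
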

\begin{proof}
We assume without loss of generality that $t_j = 0$. By symmetry, it suffices to show the claim for $t \in (0,0.16 \,
\lambdalo]$. To ease notation, we define $w(t)=v_j t-q_1(t)=w_R(t)+i\,w_I(t)$, where $w_R$ is the real part of $w$ and $w_I$ the imaginary part. Leveraging \eqref{boundAlpha}, \eqref{boundBeta} and \eqref{boundF} together with the fact that $\Ker^{\brac{2}}(t)$ and $\Ker^{\brac{3}}(t)$ are bounded by $4 \flo^2$ and $6 \flo^3$ respectively if $\abs{t}\leq 0.16 \lambdalo$ (see the proof of Lemma 2.3 in \cite{superres}), we obtain
\begin{align*}
  \abs{w_R''\brac{t}} & =  \abs{\sum_{t_k \in T}\Real{\alpha_k} \Ker^{\brac{2}}\brac{t-t_k} + \sum_{t_k \in T}\Real{\beta_k} \Ker^{\brac{3}}\brac{t-t_k}}\notag\\
  & \leq \normInfInf{\alpha}\sum_{t_k \in T}\abs{\Ker^{\brac{2}}\brac{t-t_k}} + \normInfInf{\beta}\sum_{t_k \in T} \abs{\Ker^{\brac{3}}\brac{t-t_k}}\notag\\
& \leq C_{\alpha } \lambdalo \brac{\abs{\Ker^{\brac{2}}\brac{t}} +  \sum_{t_k \in T\setminus \{0\}} \abs{\Ker^{\brac{2}}\brac{t-t_k}} } + C_{\beta}\lambdalo^2 \brac{\abs{\Ker^{\brac{3}}\brac{t}}+  \sum_{t_k \in T\setminus \{0\}} \abs{\Ker^{\brac{3}}\brac{t-t_k}}}\\
& \leq C \, \flo. 
\end{align*}
The same bound applies to $w_I$. Since $w_R(0)$, $w_R'(0)$, $w_I(0)$ and
$w_I'(0)$ are all equal to zero, this implies $\abs{w_R(t)}\leq C \flo
t^2$---and similarly for $\abs{w_I(t)}$---in the interval of interest. Whence, 
$\abs{w(t)} \leq C \flo t^2$.
\end{proof}

\section{Proof of Corollary~\ref{cor:stochastic}}
\label{sec:stochastic}
The proof of Theorem~\ref{theorem:noise} relies on two identities
\begin{align}
\normTV{\xest} & \leq \normTV{x}, \label{eq:ineq1}\\
\normLOne{\Qlo \brac{ \xest- x} } & \leq 2\delta, \label{eq:ineq2}
\end{align}
which suffice to establish 
\begin{equation*}
  \normLOne{\Khi*(\xest- x)} \leq C_0 \, \srf^2 \, \delta.  
\end{equation*}
To prove the corollary, we show that~\eqref{eq:ineq1} and~\eqref{eq:ineq2} hold. Due to the fact that $\normTwo{\epsilon}^2$ follows a $\chi^2$-distribution with $4\flo+2$ degrees of freedom, we have  
\begin{equation*}
  \P \brac{ \normTwo{\epsilon} > \brac{1+\gamma}\sigma \sqrt{4 \flo +2}=\delta} < e^{- 2\flo \gamma^2 },
\end{equation*}
for any positive $\gamma$ by  a concentration inequality (see~\cite[Section
4]{chisquare}). By Parseval, this implies that with high probability $\normLTwo{\Qlo x -y}=\normTwo{\epsilon}\leq \delta$. As a result, $\xest$ is feasible, which implies~\eqref{eq:ineq1} and furthermore
\begin{align*}
  \normLOne{\Qlo \brac{ \xest- x}} \leq \normLTwo{\Qlo \brac{ \xest- x}} \leq \normLTwo{\Qlo x -y}+\normLTwo{y-\Qlo \xest} \leq 2\delta,
\end{align*}
since by the Cauchy-Schwarz inequality $\normLOne{f} \leq \normLTwo{f}$ for any function $f$ with bounded $L_2$ norm supported on the unit interval. Thus, \eqref{eq:ineq2} also holds and the proof is complete.

\section{Extension to multiple dimensions}
\label{sec:multidim}

The extension of the proof hinges on establishing versions of Lemmas~\ref{lemma:dualpol},~\ref{lemma:dualpol_extra} and~\ref{lemma:int_t_dh_polynomial} for multiple dimensions.
These lemmas construct bounded low-frequency polynomials which interpolate a sign pattern on a well-separated set of points $S$ and have bounded second
derivatives in a neighborhood of $S$. In the multidimensional case, we need the directional derivative of the polynomials to be bounded in any direction, which can be ensured by bounding the eigenvalues of their Hessian matrix evaluated on the support of the signal. 
To construct such polynomials one can proceed in a way similar
to the proof of Lemmas~\ref{lemma:dualpol}
and~\ref{lemma:int_t_dh_polynomial}, namely, by using a low-frequency
kernel constructed by tensorizing several squared Fej\'er kernels to
interpolate the sign pattern, while constraining the first-order
derivatives to either vanish or have a fixed value. As in the one-dimensional case, one can
set up a system of equations and prove that it is well conditioned
using the rapid decay of the interpolation kernel away from the
origin. Finally, one can verify that the construction satisfies the
required conditions by exploiting the fact that the interpolation
kernel and its derivatives are locally quadratic and rapidly
decaying. This is spelled out in the proof of Proposition~C.1
in~\cite{superres} to prove a version of Lemma~\ref{lemma:dualpol} in
two dimensions. In order to clarify further how to adapt our techniques to a multidimensional setting we provide below a sketch of the proof of the analog of Lemma~\ref{lemma:TC_bound} in two dimensions. In particular, this illustrates how the increase in dimension does not change the exponent of the SRF in our recovery guarantees. 

\subsection{Proof of an extension of Lemma~\ref{lemma:TC_bound} to two dimensions}

We now have $t \in \mathbb{T}^2$. As a result, we redefine
\begin{align*}
  \SN^{\lambda} \brac{j} & := \keys{ t  \; : \; \normInf{t-t_j}\leq w \, \lambda},\\
  \SF^{\lambda} &:= \keys{ t \; : \; \normInf{t-t_j}> w \, \lambda, \;
    \forall t_j \in T},\\
  I_{\SN^{\lambda} \brac{j}}\brac{\mu} &:=
  \frac{1}{\lambdalo^2}\int_{\SN^{\lambda} \brac{j}} \normTwo{t-t_j}^2
  \abs{\mu} \brac{\text{d} t},
 \end{align*}
 where $w$ is a constant. 
 
The proof relies on the existence of a low-frequency
polynomial 
\begin{equation*}
q(t) = \sum_{k_1 = -\floEq}^{\floEq}\sum_{k_2 = -\floEq}^{\floEq} c_{k_1,k_2} e^{i2\pi (k_1 t_1 +k_2 t_2)} 
\end{equation*}
satisfying
\begin{align}
q(t_j) & = v_j, \quad t_j \in T, \label{q0_1_2D}\\
|q(t)| & \leq 1-\frac{C'_a  \normTwo{t-t_j}^2}{\lambdaloEq^2} ,  \quad  t \in \SNEq^{\lambdaloEq}\brac{j}, \label{q0_2_2D}\\
|q(t)| & < 1-C'_b ,  \quad  t \in \SFEq^{\lambdaloEq}, \label{q0_3_2D}
\end{align}
where $C_a'$ and $C_b'$ are constants. Proposition~C.1 in~\cite{superres} constructs such a polynomial. Under a minimum distance condition, which constrains the elements of $T$ to be separated by $2.38 \, \lambdalo$ in infinity norm (as explained in~\cite{superres} this choice of norm is arbitrary and could be changed to the $\ell_2$ norm), \cite{superres} shows that $q$ satisfies \eqref{q0_1_2D} and \eqref{q0_3_2D} and that both eigenvalues of its Hessian matrix evaluated on $T$ are of order $\flo^2$, which implies $\eqref{q0_2_2D}$.

As in one dimension, we perform a polar decomposition of $\PTn h$,
\begin{align*}
 \PTn h = e^{i \phi \brac{t}}\abs{\PTn h}\text{,}
\end{align*}
and work with $v_j = e^{-i \phi(t_j)}$. The rest of the proof is almost identical to the 1D case. Since
$q$ is low frequency,
\begin{align}
\abs{\int_{\mathbb{T}^2} q(t) \text{d}h (t)} &  \leq 2 \delta. \label{eq:int_q_2D}
\end{align}
Next, since $q$ interpolates $e^{-i \phi \brac{t}}$ on $T$,
\begin{align}
  \normTV{ \PTn h } = \int_{\mathbb{T}^2} q(t) \PTn h\brac{\text{d}t} &\leq 2\delta + \sum_{j\in T}
  \abs{\int_{\SN^{\lambdalo}\brac{j}\setminus\keys{t_j} } q(t)
    h\brac{\text{d}t}}+ \abs{\int_{\SF^{\lambdalo}} q(t)
    h\brac{\text{d}t}}. \label{PTh_1c_2D}
\end{align}
Applying \eqref{q0_3_2D} and H\"older's
inequality, we obtain
\begin{align}
\abs{\int_{\SF^{\lambdalo}}  q(t) h\brac{\text{d}t}} &  \leq \brac{1-C_b'}\normTV{P_{\SF^{\lambdalo}}\brac{h}}.  \label{PTh_2c_2D}
\end{align}
Setting $t_j=(0,0)$ without loss of generality, the triangle inequality and
\eqref{q0_2_2D} yield
\begin{align}
\abs{\int_{\SN^{\lambdalo}\brac{j}\setminus\keys{(0,0)}}  q(t) h\brac{\text{d}t}} 
& \leq \int_{\SN^{\lambdalo}\brac{j}\setminus\keys{(0,0)}} \abs{h}\brac{\text{d}t} - C_a' I_{\SN^{\lambdalo} \brac{j}}\brac{h} . \label{PTh_3c_2D}
\end{align}
Combining \eqref{PTh_1c_2D}, \eqref{PTh_2c_2D} and \eqref{PTh_3c_2D} gives 
\begin{align*}
\normTV{ \PTn h } \leq &  2\delta + \normTV{ \PTcn h } - C_b'  \normTV{P_{\SF^{\lambdalo}}\brac{h}}-  C_a' I_{\SN^{\lambdalo}}\brac{h}
\end{align*}
and similarly
\[
\normTV{ \PTn h } \leq   2\delta + \normTV{ \PTcn h } - w^2 \, C_a' \, \srf^{-2}  \normTV{P_{\SF^{\lambdahi}}\brac{h}}-  C_a' I_{\SN^{\lambdahi}}\brac{h}.
\]
By the same argument as in the 1D case, the fact that $\hat{x}$ has minimal total-variation norm is now sufficient to establish
\begin{align*}
C_b' \normTV{P_{\SF^{\lambdalo}}\brac{h}} + C_a' I_{\SN^{\lambdalo}}\brac{h} \leq  2 \delta, 
\end{align*}
and
\begin{align*}
 w^2 \, C_a' \, \srf^{-2} \normTV{P_{\SF^{\lambdahi}}\brac{h}} + C_a' I_{\SN^{\lambdahi}}\brac{h} \leq  2 \delta.
\end{align*}

\end{document}